\DeclareMathOperator*{\argmax}{arg\,max}
\newtheorem{lemma}{Lemma}
\newtheorem{theorem}{Theorem}
\algrenewcommand\alglinenumber[1]{\scriptsize #1}
\def \ETx {\mathcal{E}}
\def \cR {\mathcal{R}}
\def \TRx {\Gamma}	
\let\OldStatex\Statex
\renewcommand{\Statex}[1][3]{%
  \setlength\@tempdima{\algorithmicindent}%
  \OldStatex\hskip\dimexpr#1\@tempdima\relax}
\title{Optimal Offline and Competitive Online Strategies for Transmitter-Receiver Energy Harvesting}
\author{Rushil Nagda, Siddharth Satpathi, Rahul Vaze
\thanks{Rahul Vaze's research is supported by ITRA grant 13X300.}
}
\begin{document}
\maketitle
\thispagestyle{empty}
\pagestyle{empty}
\begin{abstract}
Transmitter-receiver energy harvesting model is assumed, where both the transmitter and receiver are powered by random energy source. Given a fixed number of bits, the problem is to find the optimal transmission power profile at the transmitter and ON-OFF profile at the receiver to minimize the transmission time. Structure of the optimal offline strategy is derived together with an optimal offline policy. An online policy with competitive ratio of strictly less than two is also derived.
\end{abstract}

\begin{IEEEkeywords}Energy harvesting, offline algorithm, online algorithm, competitive ratio.
\end{IEEEkeywords}
\section{Introduction}
Extracting energy from nature to power communication devices has been an emerging area of research.  Starting with \cite{ozel2012achieving, SharmaEH2014}, a lot of work has been reported on finding the capacity, approximate capacity \cite{dong2014near}, structure of optimal policies \cite{sinha2012optimal}, optimal power transmission profile  \cite{UlukusEH2011b, UlukusEH2011c, michelusi2012optimal, VazeEHICASSP14}, competitive online algorithms \cite{VazeEH2011}, etc. One thing that is common to almost all the prior work is the assumption that energy is harvested only at the transmitter while the receiver has some conventional power source. This is clearly a limitation, however, helped to get some critical insights into the problem.

In this paper, we broaden the horizon, and study the more general problem when energy harvesting is employed both at the transmitter and the receiver. 
The joint (tx-rx) energy harvesting model has not been studied in detail and only some preliminary results are available, e.g., a constant approximation to the maximum throughput has been derived in \cite{VazeEH2014}.
This problem is fundamentally different than using energy harvesting only at the transmitter, where receiver is always assumed to have energy to receive. 
The receiver energy consumption model is binary, since it uses a fixed amount of energy to stay {\it on}, and is {\it off} otherwise. Since useful transmission happens only when the receiver is \textit{on}, the problem is to find jointly optimal decisions about transmit power and receiver ON-OFF schedule. Under this model, there is an issue of coordination between the transmitter and receiver to implement the joint decisions, however, we ignore that currently in the interest to make some analytical progress.

We study the canonical problem of finding the optimal transmission power and receiver ON-OFF schedule to minimize the time required for transmitting a fixed number of bits. 
We first consider the offline case, where the energy arrivals both at the transmitter and the receiver are assumed to be known non-causally. Even though offline scenario is unrealistic, it still gives some design insights. 
Then we consider the more useful online scenario, where both the transmitter and receiver only have causal information about the energy arrivals. To characterize the performance of an online algorithm, typically, the metric of competitive ratio is used that is defined as the maximum ratio of profit of the online and the offline algorithm over all possible inputs. 


In prior work \cite{UlukusEH2011b}, an optimal offline algorithm has been derived for the case when energy is harvested only at the transmitter, which cannot be generalized with energy harvesting at the receiver together with the transmitter. To understand  the difficulty, assume that the receiver can be \textit{on} for a maximum time of $T$.
The policy of \cite{UlukusEH2011b} starts transmission at time $0$, and power transmission profile is the
one that yields the tightest piecewise linear energy consumption curve that lies under the energy harvesting cure at all times and touches the energy harvesting curve at end time.
With receiver {\it on} time constraint, however, the policy of \cite{UlukusEH2011b} may take more than $T$ time and hence may not be feasible.
So, we may have to either delay the start of transmission and/or keep stopping in-between  to accumulate more energy to transmit with higher power for shorter bursts, such that the total time for which transmitter and receiver is \textit{on}, is less than $T$.

%
%
%
%

The contributions of this paper are : 
\begin{itemize}
\item For the offline scenario, we derive the structure of the optimal algorithm, and then propose an algorithm that is shown to satisfy the optimal structure. The power profile of the proposed algorithm is fundamentally different than the optimal offline algorithm of \cite{UlukusEH2011b}, 
however, the two algorithms have some common structural properties.
\item For the online scenario, we propose an online algorithm and show that its competitive ratio is strictly less than $2$ for any energy arrival inputs. With only energy harvesting at the transmitter, a $2$-competitive online algorithm has been derived in \cite{VazeEH2011}. This result is more general with different proof technique that allows energy harvesting at the receiver. 
\end{itemize}

\section{System Model}
The energy arrival instants at transmitter are marked by $\tau_i$'s with energy $\ETx_i$'s for $i \in \{0,1,..\}$. The total energy harvested at the transmitter till time $t$ is given by $\ETx(t)=\sum\limits_{i:\tau_i < t}\ETx(t)$. Similarly, the energy arrival instants at the receiver are denoted as $r_i$ with energy $\cR_i$. With fixed power consumption of $P_{r}$ at the receiver to stay ${\it on}$, each energy arrival of $\cR_i$ adds $\Gamma_i= \frac{\cR_i}{P_{r}}$amount of receiver {\it on} time, and the total `time' harvested at the receiver till time $t$ is given by $\TRx(t)$.

 


Assuming an AWGN channel,  the rate of bits transmission, using transmit power $p$ and receiver is {\it on} is given by a monotonically increasing function $g(p)$, such that, $g(0)=0\text{ and }\lim_{x\rightarrow \infty} g(x)= \infty$, $g(x)\text{ is concave}$, $\frac{g(x)}{x} \text{ is convex monotonically decreasing} $, and $ \lim_{x\rightarrow \infty} \frac{g(x)}{x}= 0$. $\log$ function is one such example.

Let a transmission policy change its transmission power at time instants $s_i$'s, i.e. $p_i$ is the power used between time $s_i$ and $s_{i+1}$, and $p_i \ne p_{i+1}$. The start and the end time of any policy is denoted by $s_1$ and $s_{N+1}$, respectively. Thus, any policy can be represented as $\{\bm{p}$, $\bm{s}, N\}$, where $\bm{p}=\{p_1, p_2, .., p_N\}$ and $\bm{s}=\{s_1, s_2, .., s_{N+1}\}$. The energy used by a policy at the transmitter upto time $t$ is denoted by $U(t)$, and the number of bits sent by time $t$ is represented by $B(t)$. Clearly, for $j=\argmax_{i}\{\tau_i < t\}$, 
$U(t)=\sum_{i=1}^{j} p_i(s_{i+1}-s_i)+p_{j+1}(t-s_j)$, and $
B(t)=\sum_{i=1}^{j} g(p_i)(s_{i+1}-s_i)+g(p_{j+1})(t-s_j)$. Similarly, the total time for which the receiver is {\it on} till time $t$ is denoted as $O(t)$.

We assume that an infinite battery capacity is available both at the transmitter and the receiver to store the harvested energy. Finite battery case can be handled, however, the description is more laborious and currently under preparation. Our objective is, given a fixed number of bits $B_0$, minimize the time of their transmission. For any policy, the total time for which the receiver is \textit{on} is referred to as the `transmission time' or the `transmission duration',  and the time by which the transmission of $B_0$ bits is finished, is called as the `finish time'.
Thus, we want to minimize the finish time, 
\begin{align}
&\min_{\{\textbf{p},\textbf{s},N\}}			&& T\label{pb1}
\\
&\text{subject to} 				&& B(T)=B_0, 
\label{pb1_constraint_bits}
\\
&     										&& U(t)\le \mathcal{E}(t),  		&&& \forall \; t\;\in\;[0,T], \label{pb1_constraint_energy}
\\
&    										&& O(t) \le \Gamma(t).
\label{pb1_constraint_time}
\end{align}
Constraints \eqref{pb1_constraint_energy} and \eqref{pb1_constraint_time} are the energy neutrality constraints at the transmitter and receiver, i.e.  energy/time used cannot be more than available energy/time. 
Compared to the no receiver constraint \cite{UlukusEH2011b}, problem \eqref{pb1} is far more complicated, since it involves jointly solving for optimal transmitter power allocation and time for which to keep the receiver {\it on}.

\section{OPTIMAL OFFLINE ALGORITHM}

In this section, we consider an offline scenario, i.e., all energy arrival epochs $\tau_i$'s at the transmitter are known ahead of time non-causally. Moreover, for simpler exposition, however, without losing the richness of the problem, we assume that the receiver gets energy $\cR$ only at time $0$, and hence the total receiver {\it on} time is $\Gamma_0= \frac{\cR}{P_r}$.
With only one receiver arrival, constraint \eqref{pb1_constraint_time} in Problem \eqref{pb1} specializes to $\sum_{i=1:p_i\neq 0}^{N}(s_{i+1}-s_i)\le \TRx_0$. Note that even with restriction, the problem is still challenging since we have to find the optimal receiver {\it on} periods (breakup of the total receiver {\it on } time of $\TRx_0$) depending on the energy arrivals at the transmitter to minimize the finish time.

\begin{lemma}
In an optimal solution to \eqref{pb1}, if $p_i\neq 0$, then $p_i\ge p_j$ $\ \forall \ j<i$ with $i,j\in \{1,2..N\}$.
\label{lemma_increasing_power}
\end{lemma}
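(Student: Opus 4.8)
The plan is to argue by contradiction using a local power-exchange (perturbation) argument that exploits only the concavity of $g$. Suppose some optimal policy $\{\bm{p},\bm{s},N\}$ with finish time $T$ violates the claim, i.e., there exist indices $j<i$ with $p_i\neq 0$ and $p_j>p_i$. Writing the slot durations as $d_j=s_{j+1}-s_j$ and $d_i=s_{i+1}-s_i$, I would transfer a small amount of energy $\delta>0$ from slot $j$ to slot $i$, replacing $p_j$ by $p_j'=p_j-\delta/d_j$ and $p_i$ by $p_i'=p_i+\delta/d_i$, while leaving every other slot, every breakpoint $s_k$, and every duration unchanged. This conserves the total energy spent in the two slots, so the cumulative transmitter energy $U(t)$ is unchanged for all $t\ge s_{i+1}$.

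Next I would verify that the perturbed policy still satisfies all three constraints. For the transmitter energy-neutrality constraint \eqref{pb1_constraint_energy}, a direct computation gives $U_{\text{new}}(t)-U_{\text{old}}(t)=-(\delta/d_j)(t-s_j)\le 0$ inside slot $j$, the constant $-\delta\le 0$ on the slots strictly between $j$ and $i$, and $U_{\text{new}}(t)-U_{\text{old}}(t)=-\delta+(\delta/d_i)(t-s_i)\in[-\delta,0]$ inside slot $i$; hence $U_{\text{new}}(t)\le U_{\text{old}}(t)\le\ETx(t)$ for every $t$, so shifting energy to a later slot can only relax causality. For the receiver constraint \eqref{pb1_constraint_time}, since $p_j>p_i>0$ and $\delta$ is small, both $p_j'$ and $p_i'$ remain strictly positive, so the set of {\it on} slots and therefore $O(t)$ is unchanged.

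The payoff appears in the bit constraint \eqref{pb1_constraint_bits}. With durations fixed, the change in bits delivered by time $s_{i+1}$ is $\Delta B=[g(p_j')-g(p_j)]d_j+[g(p_i')-g(p_i)]d_i$, whose first-order term in $\delta$ is $\delta\,[g'(p_i)-g'(p_j)]$. Because $g$ is concave, $g'$ is decreasing, and $p_i<p_j$ forces $g'(p_i)>g'(p_j)$, so $\Delta B>0$ for all sufficiently small $\delta$. Thus the perturbed policy is feasible yet delivers strictly more than $B_0$ bits by time $T$; by continuity of $B_{\text{new}}(\cdot)$ there is a time $T'<T$ at which exactly $B_0$ bits are sent, and truncating the policy at $T'$ only lowers the energy and {\it on}-time consumed and so stays feasible. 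This contradicts the optimality of $T$.

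I would close by disposing of the boundary cases: if $p_j=0$ then $p_i\ge p_j$ holds trivially since $p_i>0$, and equal powers need no argument, so $p_j>p_i>0$ is the only case requiring the exchange. The main obstacle is the feasibility bookkeeping inside slot $i$, where the perturbed power is \emph{larger} and one must rule out $U(t)$ overshooting $\ETx(t)$; the linear-interpolation bound $U_{\text{new}}(t)-U_{\text{old}}(t)\in[-\delta,0]$ is exactly what closes this gap, and the whole argument relies on strict concavity of $g$ to yield a strict bit gain rather than a mere tie.
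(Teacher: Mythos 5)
Your perturbation argument is correct and is essentially the approach the paper only sketches for this lemma: a concavity-based rearrangement of power from an earlier, larger-power slot to a later, smaller-power one, which is automatically energy-causal (energy moves later in time), leaves the receiver \emph{on}-set --- and hence any intermediate zero-power slots, the complication the paper singles out --- untouched, and yields the same $B_0$ bits in strictly less time, contradicting optimality. The one caveat, which you already flag, is that the strictly positive bit gain requires strict concavity of $g$ (the paper's assumptions list only concavity), but the paper's own one-line sketch relies on the same strictness, so this is a shared rather than a new gap.
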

Proof involves the argument that, if powers are decreasing, then utilizing the concavity of $g(p)$, we can construct another strategy that can send same number of bits in less time. It is similar to Lemma 1 in \cite{UlukusEH2011b}, however, requires a separate proof because, with the receiver {\it on} time constraint, the optimal solution can intermittently have zero transmit powers. 
Note: For space constraints, proofs are included/omitted depending on their significance and the non-triviality.

\begin{lemma}
The optimal solution to \eqref{pb1} may not be unique, but there always exists an optimal solution where once transmission has started, the receiver remains `\textit{on}' throughtout, until the transmission is complete. \label{lemma_nobreaks}
\end{lemma}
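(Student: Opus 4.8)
The plan is to establish existence constructively: starting from an arbitrary optimal solution, I re-time its transmission segments so that they become contiguous, pushing all idle time to the very beginning, and then argue this re-timed policy is still feasible and optimal. Fix an optimal $\{\bm{p},\bm{s},N\}$ with finish time $T^\star$. By Lemma~\ref{lemma_increasing_power} its nonzero powers are non-decreasing, but after the start there may be intervals with $p_i=0$, i.e.\ idle gaps during which no bits flow and the receiver can be switched \emph{off}. Since useful transmission occurs only while the receiver is \emph{on}, I may assume the receiver is \emph{on} exactly on the nonzero-power segments, whose total length $O$ is then the receiver \emph{on} time; constraint~\eqref{pb1_constraint_time} gives $O\le\Gamma_0$.

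I then define a new policy that keeps the same ordered list of (power, duration) transmission segments but schedules them back-to-back ending at $T^\star$: the transmitter is idle on $[0,T^\star-O]$ and transmits uninterrupted on $[T^\star-O,T^\star]$, with the receiver \emph{on} throughout the latter interval. Three of the four requirements are immediate. The delivered bits $\sum_i g(p_i)d_i$ are unchanged, so $B(T^\star)=B_0$ still holds~\eqref{pb1_constraint_bits}; the total \emph{on} time is still $O\le\Gamma_0$, so~\eqref{pb1_constraint_time} holds; and the finish time is still $T^\star$, so the new policy is optimal. By construction the receiver is continuously \emph{on} from the instant transmission begins until it ends, which is the property claimed.

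The step I expect to be the crux is re-verifying the transmitter energy-neutrality constraint~\eqref{pb1_constraint_energy}. The key observation is that collapsing the interior gaps and packing against $T^\star$ moves every segment to a \emph{later} start: writing $d_j$ for the segment durations in time order, in the new policy segment $j$ begins at $T^\star-\sum_{m\ge j}d_m$, whereas in the original it began at this same value minus the total length of the gaps following it, i.e.\ strictly earlier whenever such gaps exist, so every segment is shifted to the right in time. A right shift can only shrink a segment's overlap with any prefix $[0,t]$, so summing the per-segment energies with the nonnegative weights $p_j$ yields $U_{\mathrm{new}}(t)\le U_{\mathrm{old}}(t)\le\ETx(t)$ for all $t\in[0,T^\star]$, which is precisely~\eqref{pb1_constraint_energy}. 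This completes the existence argument; non-uniqueness is then clear, since the initial idle block of length $T^\star-O$ could instead be split and reinserted as interior gaps without affecting the bits, the energy feasibility, or the finish time.
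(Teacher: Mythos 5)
Your proof is correct and takes essentially the same approach as the paper: both push all receiver-\emph{off} time to the front of the transmission by shifting the nonzero-power segments to later times, observing that a right shift can only decrease the energy consumed up to any time $t$ and hence preserves \eqref{pb1_constraint_energy}. The only difference is presentational — the paper shifts each \emph{off} period leftward one segment at a time through a chain of intermediate optimal policies, whereas you repack all segments against the finish time in a single step and verify energy feasibility globally via the overlap-monotonicity argument.
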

Lemma \ref{lemma_nobreaks} tells us that there is no need to stop in-between transmission and start again. Without affecting optimality, the start of the transmission can be delayed so that transmission power is non-zero throughout. 
\begin{proof}
We construct an optimal solution for which $p_i>0$ for all $i\in\{1,..,N\}$, i.e., with no breaks in transmission, from any other optimal solution. Let an optimal policy $X$ be characterized by $\{\bm{p},\bm{s},N\}$. Now, if $p_i\neq 0\; \forall \ i$, then we are done. Suppose some powers, say $p_{i_1},p_{i_2},...,p_{i_k}=0$ for some $k<N$, where $i_1<i_2<..<i_k$. We first look at instant $i_1$.

Consider Fig. \ref{fig_Lemma2} (a), and a new policy (say $Y$) which is same as policy $X$ before time $s_{i_1-1}$ and after time $s_{i_1+1}$. But, it keeps the receiver \textit{off} for a duration of $(s_{i_1+1}-s_{i_1})$ starting from time $s_{i_1-1}$ (i.e. from $s_{i_1-1}$ to $s_{i_1}'=(s_{i_1-1}+s_{i_1+1}-s_{i_1})$) and transmits with power $p_{i_1-1}$ from time $s_{i_1}'$ till $s_{i_1+1}$. $Y$ transmits same amount of bits in same time as $X$ and also satisfies constraints \eqref{pb1_constraint_bits}-\eqref{pb1_constraint_time}. So $Y$ is also an optimal policy. But the receiver \textit{off} duration in $Y$, $(s_{{i_1+1}}-s_{i_1})$, has been shifted to left. 

Next, we generate another policy $Z$ from $Y$ by shifting the \textit{off} duration $s_{i_1}'-s_{i_1-1}=(s_{{i_1+1}}-s_{i_1})$ to start from epoch $s_{i_1-2}$ upto $s_{i_1-1}'$, $s_{i_1-1}'-s_{i_1-2}=s_{i_1}'-s_{i_1-1}=(s_{{i_1+1}}-s_{i_1})$, as shown Fig. \ref{fig_Lemma2} (b). $p_{i_1-2}$  is shifted right to start from $s_{i_1-1}'$. Note that $Z$ is also optimal. We continue this process of shifting the receiver \textit{off} period to the left to generate new optimal policies till we reach a policy (say $W$) where the receiver is \textit{off} for time $(s_{{i_1+1}}-s_{i_1})$ from $s_1$, i.e. from $s_{1}$ to $s_1'$, $s_1'-s_1=(s_{{i_1+1}}-s_{i_1})$, as shown in Fig. \ref{fig_Lemma2}(c). As $W$ has $0$ transmission power from the start time $s_1$ to $s_1'$, the effective start time of $W$ can now be changed to $s_1'$. 

We can repeat this procedure for each \textit{off} period corresponding to $p_{i_2},...,p_{i_k}$ till the total \textit{off} period is shifted to the beginning of transmission. 
This results in a policy with no zero powers in between, that starts \textit{after} time $s_1$ (at $s_1+(s_{{i_1+1}}-s_{i_1})+..+(s_{{i_k+1}}-s_{i_k})$) and ends at the same time $s_{N+1}$ as policy $X$.
\end{proof}
\begin{figure}[htb]
  \centering
  \centerline{\includegraphics[width=8cm]{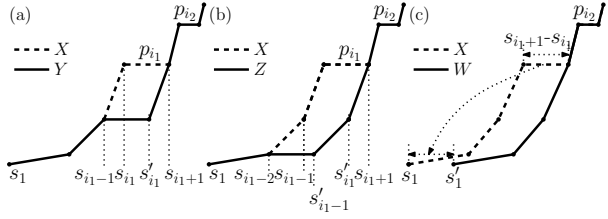}}
\caption{Illustration of Lemma \ref{lemma_nobreaks}. Receiver \textit{off} time of $(s_{j}-s_{i_1})$ is progressively shifted to left as shown in (a) to (b) to (c).}\label{fig_Lemma2}
\end{figure}
\textit{In the subsequent discussion, the optimal solution means one with no breaks in transmission.}
\begin{lemma}
For optimal policy $\{\bm{p},\bm{s},N\}$, $s_i=\tau_j$ for some $j$, $U(s_i)=\ETx(s_i^-)\ \forall i\in\{2,..,N\}$, and $U(s_{N+1})=\ETx(s^-_{N+1})$.
\label{lemma_energy_consumed} 
\end{lemma}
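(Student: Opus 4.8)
The plan is to prove the two structural facts separately, leaning on the two lemmas already in hand. By Lemma \ref{lemma_nobreaks} I may take the optimal policy to have no breaks, so $p_i>0$ for all $i$, and by Lemma \ref{lemma_increasing_power} the powers are non-decreasing, so at every interior change epoch $s_i$ ($2\le i\le N$) we in fact have $p_{i-1}<p_i$. In each case the mechanism is a local, feasibility-preserving perturbation that keeps the finish time fixed but transmits strictly more bits; any such bit surplus can be converted into a strictly shorter finish time by trimming the final transmission, and that shorter time is the actual contradiction with optimality.

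First I would show every interior change epoch is an arrival epoch, i.e. $s_i=\tau_j$ for some $j$. Suppose not, and let $(\tau_a,\tau_b)$ be the maximal open inter-arrival interval containing $s_i$; since no energy arrives strictly inside it, $\ETx(t)$ is constant there. On this interval the policy uses at least the two distinct levels $p_{i-1}<p_i$. Replacing all power on $(\tau_a,\tau_b)$ by the single energy-preserving average power $\bar p$ (total energy used on the interval divided by its length) leaves $U$ unchanged at both endpoints and linear in between, hence still below the flat curve $\ETx$, so \eqref{pb1_constraint_energy} still holds, while \eqref{pb1_constraint_time} is untouched. By concavity of $g$ (Jensen), the averaged profile sends at least as many bits in the same time, and strictly more whenever the powers genuinely differ, which they do at $s_i$. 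This strict gain contradicts optimality, so $s_i$ must be an arrival epoch.

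Next I would prove tightness, $U(s_i)=\ETx(s_i^-)$. Suppose instead the slack $\Delta=\ETx(s_i^-)-U(s_i)>0$ just before the increase at $s_i$. Pick $\eta>0$ small enough that no arrival lies in $(s_i-\eta,s_i)$, and raise the power on $(s_i-\eta,s_i)$ from $p_{i-1}$ to $p_{i-1}+\alpha$ while lowering it on $(s_i,s_i+\eta)$ from $p_i$ to $p_i-\alpha$, with $\alpha$ chosen small and $\alpha\eta<\Delta$. The exchange is energy-neutral, so $U$ is unchanged outside $(s_i-\eta,s_i+\eta)$ and the remainder of the policy stays feasible; inside, the monotone $U$ stays below the locally flat harvest curve because the extra consumption never exceeds $\alpha\eta<\Delta$, and on the right the power only drops. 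The net change in bits is $\eta\,[\,g(p_{i-1}+\alpha)+g(p_i-\alpha)-g(p_{i-1})-g(p_i)\,]$, which is positive because pulling the two unequal powers closer together strictly increases the sum of the concave rate $g$. The surplus again contradicts optimality. The endpoint identity $U(s_{N+1})=\ETx(s_{N+1}^-)$ follows from the same idea applied only on the left: if energy were left unused at the end, raising the final power $p_N$ slightly would reach $B_0$ bits before $s_{N+1}$, strictly reducing the finish time.

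The main obstacle is not the first-order bit count but verifying that each perturbation respects \eqref{pb1_constraint_energy} at \emph{every} intermediate time, not merely at the change epochs; confining the exchange to a tiny arrival-free window $(s_i-\eta,s_i+\eta)$ and invoking monotonicity of $U$ against the slack $\Delta$ is what keeps this step clean. A secondary point is that both surplus arguments need the concavity inequalities to be \emph{strict}, which holds for the strictly concave rate functions (such as $\log$) admitted by the model's assumptions on $g$.
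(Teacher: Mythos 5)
Your proof is correct and follows essentially the same route as the paper: the paper's own proof simply notes that Lemmas \ref{lemma_increasing_power} and \ref{lemma_nobreaks} give positive, non-decreasing powers and then defers the remaining perturbation arguments to Lemmas 2 and 3 of the cited reference, which are exactly the Jensen-averaging and local power-exchange arguments you have written out in full. Your own caveat is the only substantive point of note: the strictness of the bit surplus needs $g$ strictly concave, which the paper (like the cited reference) assumes only implicitly.
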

\begin{proof}
By Lemma \ref{lemma_increasing_power} and \ref{lemma_nobreaks},  $p_i\neq 0$ and $p_{i+1}\ge p_i,\forall 1\le i\le N$. So, the proof follows similar to Lemma 2,3 in \cite{UlukusEH2011b}. 
\end{proof}
Lemma \ref{lemma_energy_consumed} states that in an optimal solution, the transmission power changes only at energy arrival epochs, and the energy used is equal to all the energy that has arrived till then.
It may happen that at some epoch $\tau_k$, $U(\tau_k) = \ETx(\tau_k^-)$ holds true, but transmission power does not change. For notational simplicity, we inculde all such $\tau_k$'s in $\bm{s}$, where $U(\tau_k)= \ETx(\tau_k^-)$.
  
\begin{lemma}
Consider two policies $X$, $\{\bm{p},\bm{s},N\}$  and $Y$, $\{\bm{\widetilde{p}},\bm{\widetilde{s}},N\}$, which are feasible with respect to energy constraint \eqref{pb1_constraint_energy}, have non-decreasing powers and transmit same number of bits in total. If $Y$ is same as $X$ from time $s_2$ to $s_{N}$, but $\widetilde{p}_1=p_1-\alpha,\widetilde{p}_N=p_N+\beta, \widetilde{s}_1=s_1-\gamma, \widetilde{s}_{N+1}=s_{N+1}-\delta $ and $U(s_{N+1})=U(\widetilde{s}_{N+1})$, where $\alpha ,\beta,\gamma,\delta>0$, then $
(\widetilde{s}_{N+1}-\widetilde{s}_1)>(s_{N+1}-s_1)$.
\label{lemma_increase_time}
\end{lemma}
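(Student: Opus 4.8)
The plan is to reduce the comparison to the first and last power levels and then translate everything into the (energy, bits) plane. Since $Y$ coincides with $X$ on the segments from $s_2$ to $s_N$, these middle segments contribute the same duration, the same energy, and the same number of bits to both policies; hence it suffices to compare the two policies on their first and last segments only. Writing $a=p_1,\ b=p_N$ (so $a\le b$ by the non-decreasing-power hypothesis) and $d_1=s_2-s_1,\ d_N=s_{N+1}-s_N$ for $X$, these two segments jointly deliver total energy $E^\ast=a\,d_1+b\,d_N$ and total bits $B^\ast=g(a)d_1+g(b)d_N$ over total time $d_1+d_N$; the corresponding quantities for $Y$ use powers $a'=a-\alpha,\ b'=b+\beta$ and durations $d_1'=d_1+\gamma,\ d_N'=d_N-\delta$. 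The bit-equality hypothesis and the energy equality $U(s_{N+1})=U(\widetilde s_{N+1})$ force the totals $E^\ast,B^\ast$ to be identical for $X$ and $Y$, and $(\widetilde s_{N+1}-\widetilde s_1)-(s_{N+1}-s_1)=\gamma-\delta$ is exactly the difference of the two total times on these segments. So the claim is precisely that $Y$ spends strictly more time than $X$ there.

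The key observation is that for any two-level allocation the total time equals $E^\ast/\bar p$, where $\bar p$ is the average power, i.e.\ the $E$-coordinate of the point $\bar P=(E^\ast,B^\ast)/T$. Since $\bar P$ is a convex combination (with weights $d_1/T,d_N/T$) of the two curve points $(a,g(a))$ and $(b,g(b))$, it lies on the chord of the graph of $g$ joining them; and since $\bar P=(E^\ast,B^\ast)/T$ it also lies on the fixed ray from the origin through $(E^\ast,B^\ast)$. Thus for each policy the average power is the $E$-coordinate of the intersection of this common ray with the relevant chord, and a smaller intersection $E$-coordinate means a larger time. The whole proof therefore comes down to showing that the ray meets $Y$'s chord strictly nearer the origin than it meets $X$'s chord.

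First I would show that on the overlap interval $[a,b]$ the chord for $Y$ lies weakly below the chord for $X$. This is the standard fact that for concave $g$ a wider secant lies below a narrower one: because $a',b'$ bracket $[a,b]$, $Y$'s chord lies on or below the graph at both $a$ and $b$, hence (being linear) below $X$'s chord throughout $[a,b]$, with strictness from strict concavity of $g$ (e.g.\ $g=\log$). Second, I would show the ray is steeper than either chord. The vertical intercept of the secant through $(a,g(a)),(b,g(b))$ equals $\tfrac{ab}{\,b-a\,}\big(\tfrac{g(a)}{a}-\tfrac{g(b)}{b}\big)$, which is positive precisely because $g(p)/p$ is decreasing; a positive intercept forces the origin-to-$\bar P$ slope to exceed the chord slope, i.e.\ the ray slope $B^\ast/E^\ast$ exceeds both chord slopes. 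Combining the two facts: along a ray steeper than both chords, passing from the higher ($X$) chord to the strictly lower ($Y$) chord decreases the $E$-coordinate of intersection, so $Y$'s average power is strictly smaller and its time strictly larger; hence $\gamma-\delta>0$, which is the assertion.

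The main obstacle is pinning down the \emph{direction} of the final comparison. It is geometrically tempting but not automatic that a lower chord is met nearer the origin; this holds only because the ray is steeper than the chords, which is exactly where the hypothesis that $g(p)/p$ is decreasing enters, via the positive-intercept computation. Tracking which inequalities are strict, so as to obtain the strict $(\widetilde s_{N+1}-\widetilde s_1)>(s_{N+1}-s_1)$ rather than $\ge$, requires strict concavity of $g$ together with $\alpha,\beta>0$, and is the one place the argument must be handled with care; everything else is routine once the (energy, bits) picture is in place.
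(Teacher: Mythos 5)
Your argument is correct, and it is worth noting that the paper itself gives no proof of this lemma beyond the one-line remark that ``the proof is algebraic using the concavity of $g(p)$, and convexity of $g(p)/p$.'' Your reduction to the two outer segments is the right first move (the middle segments contribute identical time, energy and bits to both policies, and the claimed inequality is exactly $\gamma-\delta>0$), and the (energy, bits)-plane picture --- average power as the abscissa of the intersection of a fixed ray with the chord of $g$, the wider chord of $Y$ lying below the narrower chord of $X$ by concavity, and the ray being steeper than either chord because the chord's vertical intercept $\tfrac{ab}{b-a}\bigl(\tfrac{g(a)}{a}-\tfrac{g(b)}{b}\bigr)$ is positive --- assembles into a complete and correct proof; the final monotone-intersection step (comparing where a steeper ray meets two nested chords) is verified by noting that chord-minus-ray is strictly decreasing and vanishes at the respective intersection points. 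Two remarks on the comparison. First, your route is geometric where the paper advertises an algebraic computation, and it uses only the \emph{monotone decrease} of $g(p)/p$, never its convexity; in that sense your hypotheses are weaker than what the paper invokes. Second, as you correctly flag, the strict inequality genuinely requires a strictness assumption (strict concavity of $g$, equivalently strict decrease of $g(p)/p$ on the relevant range): if $g$ were affine through the origin on $[\widetilde p_1,\widetilde p_N]$ the ray and both chords would coincide and the conclusion would fail. The paper's stated assumptions on $g$ do not literally rule this out, so this is a (minor) gap in the lemma as stated rather than in your proof; it would be worth stating explicitly that strictness of the decrease of $g(p)/p$ is being used.
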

%
This lemma states that if we take any feasible policy, and decrease its first power $p_1$ \& increase its last power $p_N$ while keeping the same number of transmitted bits, the time of transmission will increase, while the finish time of the policy will reduce. The proof is algebraic using the concavity of $g(p)$, and convexity of $g(p)/p$.

\begin{lemma}  For an optimal policy $\{\bm{p},\bm{s},N\}$, either $s_{N+1} - s_1 = \TRx_0$ or $s_1 = 0$ . 
\label{transmission_duration}
\end{lemma}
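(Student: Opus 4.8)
The plan is to argue by contradiction using Lemma~\ref{lemma_increase_time}. Suppose there were an optimal policy $\{\bm{p},\bm{s},N\}$ with \emph{both} $s_1>0$ and $s_{N+1}-s_1<\TRx_0$; I will build a feasible policy with strictly smaller finish time $s_{N+1}$, contradicting optimality. By Lemmas~\ref{lemma_increasing_power}, \ref{lemma_nobreaks} and~\ref{lemma_energy_consumed} I may assume the transmission runs continuously on $[s_1,s_{N+1}]$ with $0<p_1\le\cdots\le p_N$ and $U(s_{N+1})=\ETx(s_{N+1}^-)$, so that the receiver on-time used equals the duration $s_{N+1}-s_1$ and the finish time equals $s_{N+1}$.

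I would then instantiate exactly the perturbation of Lemma~\ref{lemma_increase_time}: leave the policy unchanged on $[s_2,s_N]$, lower the first power to $p_1-\alpha$ while advancing the start to $\widetilde{s}_1=s_1-\gamma$, and raise the last power to $p_N+\beta$ while advancing the end to $\widetilde{s}_{N+1}=s_{N+1}-\delta$. The two free relations are fixed by demanding that the first segment consume the same energy (so $U(s_2)$, and hence the entire middle, is unchanged) and that the total sent bits remain $B_0$. For a small parameter $\alpha>0$ these give $\gamma=(s_2-s_1)\frac{\alpha}{p_1-\alpha}>0$ and $\delta=(s_{N+1}-s_N)\frac{\beta}{p_N+\beta}>0$, where $\beta>0$ is forced because lowering $p_1$ at fixed first-segment energy raises that segment's bits (as $g(p)/p$ is decreasing), so the last segment must shed exactly those bits by raising $p_N$. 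All four perturbations are positive and vanish as $\alpha\to0$.

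With this construction Lemma~\ref{lemma_increase_time} applies verbatim: the new duration exceeds the old one, so the increase $\gamma-\delta$ is strictly positive, while the finish time strictly drops to $s_{N+1}-\delta$. The two standing slacks are precisely what make the new policy admissible: the receiver-time slack $\TRx_0-(s_{N+1}-s_1)>0$ absorbs the duration increase $\gamma-\delta$ for small $\alpha$, and $s_1>0$ leaves room for the earlier start $\widetilde{s}_1=s_1-\gamma\ge0$. Since $B(\widetilde{s}_{N+1})=B_0$ by construction, a strictly smaller finish time contradicts optimality and proves the claim; conversely, this is exactly why neither slack being available (either $s_{N+1}-s_1=\TRx_0$ or $s_1=0$) is needed to block the improvement.

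The step I expect to be the real obstacle is verifying the transmitter energy constraint \eqref{pb1_constraint_energy} for the perturbed policy, since advancing both endpoints risks spending energy before it is harvested. On the first segment a short computation using $U(s_2)=\ETx(s_2^-)$ shows the excess of the new consumption over the old equals $\alpha(s_2-t)$ on $[s_1,s_2]$; on the final flat stretch before $s_2$ the available slack is $p_1(s_2-t)$, so the excess stays below it precisely because $\alpha<p_1$, and on earlier stretches the slack is strictly larger and dominates for small $\alpha$, while the newly added interval $[\widetilde{s}_1,s_1]$ is covered since energy is available from the start (an arrival at $t=0$). On the last segment the steeper line reaches $\ETx(s_{N+1}^-)$ at $\widetilde{s}_{N+1}$, and feasibility there requires $\ETx(\widetilde{s}_{N+1})=\ETx(s_{N+1}^-)$, which holds for $\delta$ small because arrivals are discrete so none occur in $(\widetilde{s}_{N+1},s_{N+1})$; on the interior the excess $\beta(t-s_N)$ is dominated by the slack for small $\beta$. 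Finally, the degenerate case $N=1$ admits no \emph{decrease-first/increase-last} split and is handled separately: a pure leftward shift of the single segment lowers the finish time whenever $s_1>0$, forcing $s_1=0$ at the optimum.
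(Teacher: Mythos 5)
Your proposal is correct and follows essentially the same route as the paper: a contradiction via the perturbation of Lemma~\ref{lemma_increase_time}, with $\alpha$ chosen small enough that the perturbed powers remain energy-feasible and the duration increase $\gamma-\delta$ stays within the slack $\TRx_0-(s_{N+1}-s_1)$, so the strictly earlier finish time $s_{N+1}-\delta$ contradicts optimality. Your extra care about feasibility on $[\widetilde{s}_1,s_1]$ and the separate $N=1$ case only makes explicit what the paper's argument leaves implicit.
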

\begin{proof} We use the method of contradiction.
Suppose the optimal policy say $X$, starts at $s_1>0$ and has transmission time $(s_{N+1}-s_1)< \TRx_0$. 
We will generate another policy which has finish time less than that of $X$, having transmission time squeezed in between $(s_{N+1}-s_1)$ and $\TRx_0$.
Consider policy $Y$ ($\{\bm{\widetilde{p}},\bm{\widetilde{s}},N\}$) in relation to $X$, as defined in Lemma \ref{lemma_increase_time}. As $\alpha$, $\beta$, $\delta$, $\gamma$ are all related (by constraints presented in Lemma \ref{lemma_increase_time}), choice of one variable (we consider $\alpha$) defines $Y$. By definition of $s_i$'s, $s_{2}$ is the first energy arrival which is on the boundary of energy constraint (\ref{pb1_constraint_energy}) i.e. $U(s_2)=\ETx(s_2^-)$ and $s_{N}$ is the last epoch satisfying $U(s_N)=\ETx(s_N^-)$. Hence, we can choose $\alpha>0$, such that $\widetilde{p}_1$ and $\widetilde{p}_N$ would be feasible with respect to energy constraint (\ref{pb1_constraint_energy}). Note that if $s_1=0$, then any value of $\alpha$ would have made $\widetilde{p}_1$ infeasible. 

From Lemma \ref{lemma_increase_time}, we know that the transmission time of policy $Y$ is more than that of $X$, i.e. $(\widetilde{s}_{N+1}-\widetilde{s}_1) > (s_{N+1} - s_{1})$. From the hypothesis $(s_{N+1}-s_1) < \TRx_0$. Therefore, let $(s_{N+1}-s_1)=\TRx_0-\epsilon$, with $\epsilon >0$. If the chosen value of $\alpha$ is such that $\gamma -\delta\le\epsilon$, then $\left( \widetilde{s}_{N+1}-\widetilde{s}_1\right)<\TRx_0$. If not, then we can further reduce $\alpha$ so that $\gamma -\delta\le\epsilon$ ($\alpha$,$\beta$,$\gamma$,$\delta$ being related by continuous functions).  Note that, when $\epsilon=0$, any choice of $\alpha$ would make $\left( \widetilde{s}_{N+1}-\widetilde{s}_1\right)>\TRx_0$. Hence, with this choice of $\alpha$, $(s_{N+1}-s_1)<\left( \widetilde{s}_{N+1}-\widetilde{s}_1\right)<\TRx_0$ holds and  policy $Y$ 
contradicts the optimality of policy $X$ (as finish time of $Y$, $\widetilde{s}_{N+1}=s_{N+1}-\delta <s_{N+1}$ from Lemma \ref{lemma_increase_time}). Thus $s_{N+1}-s_1=\TRx_0$ if $s_1\neq 0$ in an optimal policy.
\end{proof}

%
\begin{theorem}
A policy $\{\bm{p},\bm{s},N\}$ is an optimal solution to Problem 1 if and only if, 
\label{th_algo1_1}
\begin{align}
&\sum_{i=1}^{i=N}g(p_i)(s_{i+1}-s_i)=B_0; 								
\label{claim1}
\\
&p_1\le p_2 \ldots \le p_N;
\label{claim3}  
\\
&\nonumber s_i=\tau_j  \ \ \ \ \ \ \ \ \ \ \ \ \ \ \ \text{ for some } j, i\in \{2,..,N\} \ \text{ and }
\\
& U(s_i)=\ETx(s_i^-), \ \ \ \ \ \forall i\in \{2,..,N+1\};
\label{claim4}
\\
&\nonumber s_{N+1}-s_1=\TRx_0,  \ \ \ 						\text{ if } s_1>0 \text{ or }
\\
& s_{N+1}\le \TRx_0,			\ \ \ \ \ \ \ \ \				\text{ if } s_1=0;
\label{claim2}
\\
&\exists s_j:s_j\in \bm{s} \text{ and } s_j=\tau_q,
\label{claim5}
\end{align}
where $\tau_q$ is defined in INIT\_POLICY of section IV.
\end{theorem}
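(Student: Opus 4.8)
The statement is an if-and-only-if characterization of optimality, so my plan is to split it into necessity (every optimal policy satisfies \eqref{claim1}--\eqref{claim5}) and sufficiency (every policy satisfying \eqref{claim1}--\eqref{claim5} is optimal), and treat these separately. For necessity, four of the five conditions are essentially restatements of the preceding lemmas, once we adopt the convention of Lemma~\ref{lemma_nobreaks} that the optimal solution has no internal breaks. Condition \eqref{claim1} is the bit constraint \eqref{pb1_constraint_bits}. Condition \eqref{claim3} follows by combining Lemma~\ref{lemma_increasing_power} (powers are non-decreasing wherever nonzero) with Lemma~\ref{lemma_nobreaks} (no zero powers remain in the interior), so that all $p_i>0$ and the sequence is non-decreasing. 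Condition \eqref{claim4} is exactly Lemma~\ref{lemma_energy_consumed}, and condition \eqref{claim2} is exactly Lemma~\ref{transmission_duration}. The genuinely new piece is \eqref{claim5}: I would show that the optimal epoch set must contain the pivot arrival $\tau_q$ returned by INIT\_POLICY, arguing that if an optimal policy neither started nor changed power at $\tau_q$, the local perturbation of Lemma~\ref{lemma_increase_time} applied around $\tau_q$ would strictly reduce the finish time while preserving feasibility, contradicting optimality.

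For sufficiency, the subtlety is that conditions \eqref{claim1}--\eqref{claim4} together with \eqref{claim2} do \emph{not} by themselves force optimality: they admit a one-parameter family of feasible policies that differ in how much the start is delayed (each using the full receiver on-time $\TRx_0$ and satisfying the energy-tightness equalities, but finishing at different times). The role of \eqref{claim5} is precisely to anchor the policy at $\tau_q$ and thereby select the optimal member of this family, ruling out the over-delayed ones. My plan is therefore to show that \emph{every} policy satisfying all of \eqref{claim1}--\eqref{claim5} attains the \emph{same} finish time. I would do this by assuming two policies $X$ and $X'$ both satisfy \eqref{claim1}--\eqref{claim5} yet have different finish times, and derive a contradiction: the energy-balance equalities $U(s_i)=\ETx(s_i^-)$ from \eqref{claim4} force the interval powers given the breakpoints, the anchor $\tau_q$ from \eqref{claim5} fixes the alignment of the two epoch sets, and \eqref{claim1} fixes the length of the final interval; comparing $X$ and $X'$ interval-by-interval outward from $\tau_q$ and invoking Lemma~\ref{lemma_increase_time} (any transfer of power between the first and last intervals strictly changes the transmission time unless the policies coincide) would show the two policies must be identical, hence share a finish time.

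The main obstacle is condition \eqref{claim5} and its dual role. In necessity, proving that an optimal policy must pass through exactly $\tau_q$ needs a careful case split tied to the definition of $\tau_q$ in INIT\_POLICY, in particular distinguishing whether the unconstrained start-at-zero profile already respects $\TRx_0$ (the $s_1=0$ regime with $s_{N+1}\le\TRx_0$) or whether delaying is forced (the $s_1>0$ regime with $s_{N+1}-s_1=\TRx_0$). In sufficiency, the hard part is showing that anchoring at $\tau_q$ genuinely collapses the one-parameter family to a single finish time; once \eqref{claim5} is established as the correct anchor, the uniqueness argument follows from repeated application of Lemma~\ref{lemma_increase_time}, but arranging that comparison so the two regimes of \eqref{claim2} are handled uniformly is where most of the care is needed.
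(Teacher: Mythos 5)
Your necessity half coincides with the paper's: the paper's entire stated proof is that necessity follows from Lemmas~\ref{lemma_increasing_power}--\ref{lemma_Q}, and your mapping (\eqref{claim1} from the bit constraint, \eqref{claim3} from Lemmas~\ref{lemma_increasing_power} and \ref{lemma_nobreaks}, \eqref{claim4} from Lemma~\ref{lemma_energy_consumed}, \eqref{claim2} from Lemma~\ref{transmission_duration}, \eqref{claim5} from Lemma~\ref{lemma_Q}) is exactly that. One caution on \eqref{claim5}: what you are really proposing is a proof of Lemma~\ref{lemma_Q} itself (which the paper states without proof), and the perturbation you invoke cannot be a generic application of Lemma~\ref{lemma_increase_time} ``around $\tau_q$'' --- that lemma only trades mass between the \emph{first} and \emph{last} power levels. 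The argument has to exploit the specific definition of $\tau_q$ as the first tight epoch of the earliest feasible constant-power policy, so that an optimal policy with $U(\tau_q)<\ETx(\tau_q^-)$ can be shown to admit a strictly better rearrangement.

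For sufficiency the paper offers nothing (``for lack of space, sufficiency proof is omitted''), so your sketch is genuinely additional content, and its logical skeleton is sound: if some optimal policy satisfies \eqref{claim1}--\eqref{claim5} (necessity plus existence) and all policies satisfying \eqref{claim1}--\eqref{claim5} share a finish time, then all of them are optimal. Two points need repair, though. First, the ``one-parameter family'' framing is off: once \eqref{claim1} is imposed, the start time is no longer a free parameter along the Lemma~\ref{lemma_increase_time} deformation (that deformation preserves the bit count but \emph{changes} the transmission duration, so it cannot stay on the manifold $s_{N+1}-s_1=\TRx_0$). The residual multiplicity that \eqref{claim5} must kill is combinatorial --- different candidate policies can be energy-tight at \emph{different subsets} of the arrival epochs $\tau_j$ while still meeting \eqref{claim1}--\eqref{claim4} and \eqref{claim2}. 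Second, and consequently, your ``interval-by-interval outward from $\tau_q$'' comparison must be set up for two policies with different numbers of segments and different tight-epoch sets; as written it tacitly assumes the breakpoints align. This is fixable with a tightest-curve style argument anchored at the common point $\tau_q$, but as stated it is a gap rather than a proof.
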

\begin{proof} The necessity of these conditions is established in Lemmas \ref{lemma_increasing_power}-\ref{lemma_Q}. For lack of space, sufficiency proof is omitted.
\end{proof}
\section{Optimal Offline Algorithm }
In this section, we propose an offline algorithm $\mathsf{OFF}$, and show that it satisfies the sufficiency conditions of Theorem \ref{th_algo1_1}.
Algorithm $\mathsf{OFF}$ first finds an initial feasible solution via INIT\_POLICY, and then iteratively improves upon it via PULL\_BACK. Finally, QUIT produces the output.
\subsection{INIT\_POLICY} 
We find a simple constant power policy that is feasible and starts as early as possible. Also, we try to make it satisfy most of the sufficient conditions of Theorem \ref{th_algo1_1}.

\textit{Step1:} Identify the first energy arrival instant $\tau_n$, so that using $\ETx(\tau_n)$ energy and $\TRx_0$ time, $B_0$ or more bits can be transmitted with a constant power (say $p_c$), i.e. $\TRx_0g\left(\dfrac{\ETx(\tau_n)}{\TRx_0}\right)\ge B_0$. Then solve for $\widetilde{\TRx}_0$,
\begin{small}
\begin{equation}
\widetilde{\TRx}_0\, g\left(\dfrac{\ETx(\tau_n)}{\widetilde{\TRx}_0}\right)= B_0,\ p_c = \dfrac{\ETx({\tau_n})}{\widetilde{\TRx}_0}.
\label{INIT_POLICY_time}
\end{equation}
\end{small}
\begin{figure}
\centering
  \centerline{\includegraphics[width=8cm]{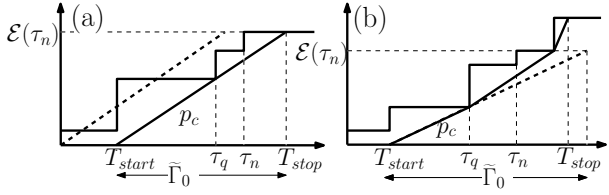}}
\caption{Figure showing point $\tau_q$.}\label{straight}
\end{figure} 
\textit{Step2:} Find the earliest time $T_{start}$, such that transmission with power $p_c$ from $T_{start}$ for $\widetilde{\TRx}_0$ time, is feasible with energy constraint \eqref{pb1_constraint_energy}. Set $T_{stop} = T_{start} + \widetilde{\TRx}_0$. Let $\tau_q$ be the \textit{first epoch} where $U(\tau_q) = \ETx(\tau_q^-)$ (Fig. \ref{straight}).
Next Lemma shows that point $\tau_q$ thus found is a `good' starting solution.
\begin{lemma}
In every optimal solution, at energy arrival epoch $\tau_q$ defined in INIT\_POLICY, $U(\tau_q)=\ETx(\tau_q^-)$.
\label{lemma_Q}
\end{lemma}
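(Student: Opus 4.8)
The plan is to argue by contradiction, mirroring the proof of Lemma \ref{transmission_duration} and again leaning on the exchange of Lemma \ref{lemma_increase_time}. Assume an optimal policy $\{\bm p,\bm s,N\}$ has strict slack at $\tau_q$, i.e. $U(\tau_q)<\ETx(\tau_q^-)$, and aim to exhibit a feasible policy with strictly smaller finish time.

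Before the contradiction I would record three facts. (i) INIT\_POLICY is feasible and finishes at $T_{stop}$, so every optimal finish time obeys $s_{N+1}\le T_{stop}=T_{start}+\widetilde{\TRx}_0$, and $\tau_q\in[T_{start},T_{stop}]$. (ii) The binding corner satisfies $\tau_q\le\tau_n$: since the earliest feasible start is $T_{start}=\max_i\{\tau_i-\ETx(\tau_i^-)/p_c\}$ and every arrival $\tau_i>\tau_n$ carries $\ETx(\tau_i^-)\ge\ETx(\tau_n)=p_c\widetilde{\TRx}_0$, those arrivals cannot be the maximizer, so the tangency epoch lies at or before $\tau_n$. (iii) By the minimality of $\tau_n$ and concavity of $g$, using all energy available strictly before $\tau_n$ over an on-time of at most $\TRx_0$ yields at most $\TRx_0\,g(\ETx(\tau_n^-)/\TRx_0)<B_0$ bits; hence no feasible policy can complete $B_0$ by time $\tau_n$, so $s_{N+1}>\tau_n\ge\tau_q$.

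With these in hand I would locate $\tau_q$ inside the transmission window. Fact (iii) already rules out $\tau_q>s_{N+1}$. The case $\tau_q<s_1$ is excluded by combining (i) with Lemma \ref{transmission_duration}: if the start were after $\tau_q\ge T_{start}$, then $s_{N+1}=s_1+(\text{on-time})\le T_{stop}$ forces the on-time to be strictly below $\widetilde{\TRx}_0\le\TRx_0$, whereupon Lemma \ref{transmission_duration} gives $s_1=0\le\tau_q$, contradicting $s_1>\tau_q$ whenever $\tau_q>0$ (and the claim is trivial when $\tau_q=0$). Thus $s_1\le\tau_q\le s_{N+1}$, and by Lemma \ref{lemma_energy_consumed} the slack places $\tau_q$ strictly between two consecutive binding epochs $s_k<\tau_q<s_{k+1}$, on a segment of constant power.

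Finally I would convert the slack into an improvement. The economic content is that transmitting part of the early, energy-cheap portion sooner lightens the load that must be cleared after the bottleneck and so shortens the tail; concretely I would perturb the profile in the spirit of Lemma \ref{lemma_increase_time} (lower an early power and advance the start, raise the final power and advance $s_{N+1}$), using the slack $\ETx(\tau_q^-)-U(\tau_q)>0$ to guarantee that the advanced and front-loaded consumption stays under the harvesting curve. Lemma \ref{lemma_increase_time} then delivers a strictly earlier finish time at no cost to the bit count, contradicting optimality. I expect this last step to be the crux: unlike Lemma \ref{transmission_duration}, where on-time slack absorbed the perturbation, here the perturbation must simultaneously respect energy causality \eqref{pb1_constraint_energy} at the later binding epochs $s_{k+1},\dots,s_N$ (where $U$ already meets $\ETx$) and the receiver budget \eqref{pb1_constraint_time}; showing that the slack at $\tau_q$ is always enough to carve out such an admissible, finish-reducing perturbation --- and dispatching the degenerate configurations $\tau_q=s_1$, $\tau_q=0$, and on-time exactly $\TRx_0$ --- is where the real work lies.
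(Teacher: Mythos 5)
The paper omits its own proof of Lemma~\ref{lemma_Q}, so your attempt has to be judged on its merits. Your preparatory facts are correct and genuinely useful: feasibility of INIT\_POLICY gives $s_{N+1}\le T_{start}+\widetilde{\TRx}_0$; minimality of $\tau_n$ plus Jensen's inequality shows no feasible policy finishes $B_0$ bits by $\tau_n$, so $s_{N+1}>\tau_n\ge\tau_q$; and Lemma~\ref{transmission_duration} rules out $s_1>T_{start}$ (any such start would have on-time strictly below $\widetilde{\TRx}_0\le\TRx_0$, forcing $s_1=0$). Together with Lemma~\ref{lemma_energy_consumed} this correctly places $\tau_q$ strictly inside a constant-power segment $[s_k,s_{k+1}]$ whose endpoints are binding.

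The gap is the last step, which you flag as ``where the real work lies'' but do not carry out --- and the mechanism you propose cannot carry it out. The Lemma~\ref{lemma_increase_time} exchange (lower $p_1$, advance $s_1$, raise $p_N$) strictly increases the on-time and requires $s_1>0$; by Lemma~\ref{transmission_duration} an optimal policy has either $s_1=0$ or on-time exactly $\TRx_0$, so that exchange is blocked in precisely the cases you must handle. Worse, when $\tau_q$ lies in an interior segment the exchange does not change $U(\tau_q)$ at all, so it cannot exploit the slack $\ETx(\tau_q^-)-U(\tau_q)>0$. Nor can the slack be removed by front-loading energy inside $[s_k,s_{k+1}]$: since $U(s_{k+1})=\ETx(s_{k+1}^-)$ is already tight, extra power on $[s_k,\tau_q]$ must be compensated on $[\tau_q,s_{k+1}]$, and by concavity of $g$ this mean-preserving spread only loses bits. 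What closes the argument is not a perturbation but an interpolation using the facts you already established: with $s_1\le T_{start}$ and $s_{k+1}\le s_{N+1}\le T_{start}+\widetilde{\TRx}_0$, the INIT\_POLICY line $L(t)=p_c(t-T_{start})$ satisfies $L(s_k)\le \ETx(s_k^-)=U(s_k)$ and $L(s_{k+1})\le\ETx(s_{k+1}^-)=U(s_{k+1})$ (trivially where $t<T_{start}$, by feasibility of INIT\_POLICY otherwise), while $L(\tau_q)=\ETx(\tau_q^-)>U(\tau_q)$ by hypothesis. Both $L$ and $U$ are affine on $[s_k,s_{k+1}]$ and $\tau_q$ is a convex combination of the endpoints, so this is impossible. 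Replacing your perturbation step with this two-line comparison completes the proof.
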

Continuing with INIT\_POLICY, if $U(T_{stop}) = \ETx(T_{stop}^-)$ as shown in Fig. \ref{straight}(a), then terminate INIT\_POLICY with constant power policy $p_c$. 

Otherwise if $U(T_{stop}) < \ETx(T_{stop}^-)$, then 
modify the transmission after $\tau_q$ as follows. Set $\widetilde{B}_0 = (T_{stop} - \tau_q)g(p_c)$,  which denotes the number of bits left to be sent after time $\tau_q$. Then apply Algorithm 1 of \cite{UlukusEH2011b} \textit{from time $\tau_q$} to transmit $\widetilde{B}_0$ bits in as minimum time as possible without considering the receiver {\it on} time constraint. 
Update $T_{stop}$, to where this policy ends. So, $U(T_{stop}) = \ETx(T_{stop}^-)$ from \cite{UlukusEH2011b}. Since Algorithm 1 \cite{UlukusEH2011b} is optimal, it takes minimum time ($=T_{stop}-\tau_q$) to transmit $\widetilde{B}_0$ starting at time $\tau_q$. 
However, using power $p_c$ to transmit $\widetilde{B}_0$ takes $(T_{start}+\widetilde{\TRx}_0 - \tau_q)$ time.
Hence, $T_{stop}\le (T_{start}+\widetilde{\TRx}_0)$. 
As $\widetilde{\TRx}_0\le \TRx_0$ from \eqref{INIT_POLICY_time}, $(T_{stop}- T_{start})\le \TRx_0$. This shows that solution thus found using Algorithm 1 \cite{UlukusEH2011b}, is indeed feasible with receiver time constraint \eqref{pb1_constraint_time}. Now, output of INIT\_POLICY is a policy that transmits at power $p_c$ from $T_{start}$ to $\tau_q$, and after $\tau_q$ uses Algorithm 1 of \cite{UlukusEH2011b}.  

\begin{figure}
\centering
  \centerline{\includegraphics[width=8cm]{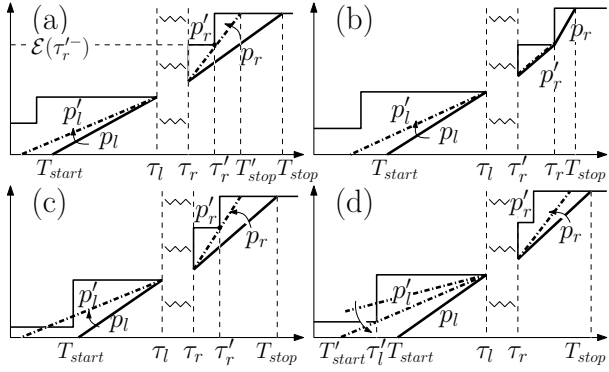}}
\caption{Figures showing possible configurations in any iteration of the PULL\_BACK. The solid line represents the transmission policy in the previous iteration and dash dotted lines are for the current iteration.}\label{figure_Algorithm1}
\end{figure}
\subsection{ PULL\_BACK}
Now, we describe the iterative subroutine PULL\_BACK whose input is policy $\{\bm{p},\bm{s},N\}$ output by INIT\_POLICY.
Clearly $\{\bm{p},\bm{s},N\}$ satisfies all but structure \eqref{claim2} of Theorem \ref{th_algo1_1}. 
So, the main idea of PULL\_BACK is to increase the transmission duration from $(s_{N+1}-s_1)\le \widetilde{\TRx}_0$, in INIT\_POLICY, to $\TRx_0$ in order to satisfy \eqref{claim2}, while decreasing the finish time for reaching the optimal	 solution. To achieve this, we utilize the structure presented in Lemma \ref{lemma_increase_time} and iteratively increase the last transmission power $p_N$, and decease the first transmission power $p_1$. 
%

Initialize $\tau_l=s_2,\tau_r=s_N,p_l=p_1,p_r=p_N,T_{start}=s_1,T_{stop}=s_{N+1}$. In any iteration, $\tau_{l}$ and $\tau_{r}$ are assigned to the first and last energy arrival epochs, where $U(\tau_l)=\ETx(\tau_l^-)$ and $U(\tau_r)=\ETx(\tau_r^-)$. $p_l$ and $p_r$ are the constant transmission powers before $\tau_l$ and after $\tau_r$, respectively. 
We reuse the notation $\tau$ here, because $\tau_l$ and $\tau_r$ will occur at energy arrival epochs from Lemma \ref{lemma_energy_consumed}. 
$T_{start}$ and $T_{stop}$ are the start and finish time of the policy, found in any iteration. $\tau_l, \tau_r, p_l, p_r, T_{start}, T_{stop}$ get updated to $\tau_l', \tau_r', p_l', p_r', T'_{start}, T'_{stop}$ over an iteration. In any iteration, only one of $\tau_l$ or $\tau_r$ gets updated, i.e., either $\tau_l'=\tau_l$ or $\tau_r'=\tau_r$. Further, PULL\_BACK ensures that \textit{transmission powers between $\tau_l$ and $\tau_r$ do not get changed} over an iteration.
Fig. \ref{figure_Algorithm1} shows the possible updates in an iteration of PULL\_BACK.

\textit{Step1, Updation of $\tau_r$, $p_r$:} Initialize $p_r'=p_r$ and increase $p_r'$ till it hits the boundary of energy constraint \eqref{pb1_constraint_energy}, say at $(t_r',\ETx(t_r'^-))$ as shown in Fig. \ref{figure_Algorithm1}(a). The last epoch where $p_r'$ hits \eqref{pb1_constraint_energy} is set to $\tau_r'$. So, $U(\tau_r') = \ETx(\tau_r'^-)$. Set $T_{stop}'$ to where power $p_r'$ ends.
Calculate $p_l'$ such that decrease in bits transmitted due to change from $p_r$ to $p_r'$ is compensated by increasing $p_l$ to $p_l'$, via
\begin{align}
\nonumber g(p_r)(T_{stop}&-\tau_r)-g(p_r')(T_{stop}'-\tau_r')
\\
&=g(p_l')\frac{\ETx(\tau_l'^-)}{p_l'}-g(p_l)(\tau_l-T_{start}).
\label{eq_example1}
\end{align}
Suppose, $p_r$ can be increased till infinity without violating \eqref{pb1_constraint_energy}, as shown in Fig. \ref{figure_Algorithm1}(b).  This happens when there in no energy arrival between $\tau_r$ and $T_{stop}$. In this case, set $p_r'$ to the transmission power at $\tau_r^-$. Set $\tau_r'$ as the epoch where $p_r'$ starts, and $T_{stop}'$ to $\tau_r$. Calculate $p_l'$ similar to \eqref{eq_example1}.

\textit{Step2, Updation of $\tau_l, p_l$}: If $p_l'$ obtained from \textit{Step1} is feasible, as shown in Fig. \ref{figure_Algorithm1}(a), set $T_{start}'=\tau_l-\frac{\ETx(\tau_l'^-)}{p_l'}$, $\tau_l'=\tau_l$. Proceed to \textit{Step3}. Otherwise, if $p_l'$ is not feasible, as shown in Fig. \ref{figure_Algorithm1}(c), the changes made to $\tau_r',p_r'$ in \textit{Step1} are discarded. As shown in Fig. \ref{figure_Algorithm1} (d), $p_l'$ is increased from its value in \textit{Step1} until it becomes feasible. $\tau_l'$ is set to the first epoch where $U(\tau_l') = \ETx(\tau_l'^-)$. Similar to \textit{Step1}, calculate $p_r'$ such that the increase in bits transmitted due to change of $p_l$ to $p_l'$ is compensated, and update $T_{stop}'$ accordingly. Set $\tau_r'=\tau_r$. Proceed to \textit{Step3}.

\textit{Step3, Termination condition:} If $T_{stop}' - T_{start}' \ge \TRx_0$ or $T_{start}' = 0$, then terminate PULL\_BACK. Otherwise, update $\tau_l, \tau_r, p_l, p_r, T_{start}, T_{stop}$ to $\tau_l', \tau_r', p_l', p_r', T'_{start}, T'_{stop}$ receptively and GOTO \textit{Step1}.
\begin{lemma}
Transmission time $(T_{stop}-T_{start})$ monotonically increases over each iteration of PULL\_BACK.
\label{lemma_PULL_BACK_power}
\end{lemma}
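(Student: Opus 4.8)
The plan is to establish the monotonicity one iteration at a time, by realizing each iteration of PULL\_BACK as an instance of the perturbation analysed in Lemma \ref{lemma_increase_time}. Write $\gamma = T_{start}-T_{start}'$ and $\delta = T_{stop}-T_{stop}'$ for the leftward shifts of the start and finish times produced by a single iteration. The new transmission time minus the old one is $(T_{stop}'-T_{start}')-(T_{stop}-T_{start})=\gamma-\delta$, so the claim is exactly that $\gamma>\delta$ in every iteration. This is precisely the conclusion of Lemma \ref{lemma_increase_time}, provided its hypotheses (in particular $\alpha,\beta,\gamma,\delta>0$) are met. Thus the whole task reduces to checking that the policy $X$ at the start of an iteration and the policy $Y$ at its end fit the template of that lemma.

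To set this up I would view both $X$ and $Y$ through the three-part decomposition used by PULL\_BACK: a first segment at constant power $p_l$ on $[T_{start},\tau_l]$, an unchanged block of powers on $[\tau_l,\tau_r]$, and a last segment at constant power $p_r$ on $[\tau_r,T_{stop}]$. I would then verify, in order, the preconditions of Lemma \ref{lemma_increase_time}: (i) both policies are feasible for \eqref{pb1_constraint_energy}, which PULL\_BACK enforces by stopping each power change exactly when it meets the harvesting curve; (ii) the powers are non-decreasing, which follows from Lemma \ref{lemma_increasing_power} together with the fact that the perturbed $p_l'$ and $p_r'$ remain ordered with the fixed middle block; (iii) both transmit $B_0$ bits, which is exactly what the compensation identity \eqref{eq_example1} enforces; and (iv) the first power strictly decreases and the last power strictly increases, together with $\gamma,\delta>0$.

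The argument then splits according to which branch of \textit{Step2} is taken. When the candidate $p_l'$ is feasible, $\tau_l'=\tau_l$ is held fixed, the right boundary moves to $\tau_r'$, and the move is literally ``lower $p_l$, raise $p_r$''; here the decomposition of $Y$ uses the same middle block $[\tau_l,\tau_r]$ as $X$, so Lemma \ref{lemma_increase_time} applies directly. In the other branch the tentative increase of $p_r$ is discarded, $\tau_r'=\tau_r$ is held fixed, and the left boundary relocates to $\tau_l'$; here I would show that the finally chosen $p_l'$ is still below the original $p_l$ (it is raised from its infeasible \textit{Step1} value only up to the feasibility threshold, which one checks lies below $p_l$ since the old start time is positive), so the net move is again ``lower first power, raise last power''. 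The unbounded sub-case of Fig. \ref{figure_Algorithm1}(b), where no arrival lies in $(\tau_r,T_{stop})$ and $p_r'$ is taken as the power just before $\tau_r$, is handled identically once $\tau_r'$ and $T_{stop}'$ are fixed as prescribed.

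The main obstacle is the bookkeeping forced by the moving boundary: since one of $\tau_l,\tau_r$ is relabelled during the iteration, I must check that the ``same middle'' and the equal-consumed-energy hypotheses of Lemma \ref{lemma_increase_time} genuinely survive the relabelling, and this energy condition is the delicate point. The lever for it is that the updated boundary is always placed at an epoch where consumption meets harvesting, $U(\tau_l')=\ETx(\tau_l'^-)$ and $U(\tau_r')=\ETx(\tau_r'^-)$ (Lemma \ref{lemma_energy_consumed}). These binding conditions pin the first-segment energy to $\ETx(\tau_l^-)$ in both $X$ and $Y$ and leave the middle block untouched, so the energy and bit-count bookkeeping collapses onto the single compensation equation \eqref{eq_example1}; if the total consumed energy cannot be forced to match exactly, I would instead invoke the concavity of $g$ and convexity of $g(p)/p$ directly on the first and last segments, exactly as in the proof of Lemma \ref{lemma_increase_time}, to obtain $\gamma>\delta$ and hence the strict increase of the transmission time over the iteration.
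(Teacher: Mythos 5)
Your route is the one the paper itself relies on: the paper omits the proof of this lemma but explicitly states that PULL\_BACK is built around Lemma \ref{lemma_increase_time}, and your reduction --- each iteration is a ``lower the first power, raise the last power'' perturbation with the block on $[\tau_l,\tau_r]$ frozen, bits matched by \eqref{eq_example1}, total consumed energy pinned by the binding conditions at $\tau_l$ and $T_{stop}$, and the claim equivalent to $\gamma>\delta$ --- is exactly that argument; moreover your worry about the relabelled boundary is largely unfounded, since $\tau_l'$ and $\tau_r'$ only matter as bookkeeping for the \emph{next} iteration, while the perturbation in the current one leaves $[\tau_l,\tau_r]$ intact, so Lemma \ref{lemma_increase_time} applies with $s_2=\tau_l$, $s_N=\tau_r$. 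The one assertion that is not right as written is that the sub-case of Fig.~\ref{figure_Algorithm1}(b) is ``handled identically'': there $T_{stop}'=\tau_r$, so the last segment is deleted rather than steepened by a finite $\beta$, the two policies no longer have the same number of segments, and Lemma \ref{lemma_increase_time} cannot be invoked verbatim --- instead one needs the short direct estimate that the marginal bit gain from lengthening the first segment at fixed energy, $g(p_l)-p_l g'(p_l)$, is strictly below $g(p_r)$ (since $g'>0$ and $p_l\le p_r$), so recovering the $g(p_r)(T_{stop}-\tau_r)$ lost bits costs strictly more than $T_{stop}-\tau_r$ of added first-segment time, which again yields $\gamma>\delta$.
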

\begin{theorem} Worst case running time of PULL\_BACK is linear with respect to the number of energy harvests before finish time of INIT\_POLICY.
\end{theorem}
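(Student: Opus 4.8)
The plan is to bound the number of iterations of PULL\_BACK by the number of energy arrival epochs inside the INIT\_POLICY window, and to charge only amortized constant work to each such epoch, so that the total running time is linear in that count. Let $n$ denote the number of energy arrivals occurring strictly before the finish time $T_{stop}$ produced by INIT\_POLICY. First I would pin down the state space of the breakpoints: by Lemma \ref{lemma_energy_consumed}, in every feasible policy that PULL\_BACK maintains the tight points $\tau_l$ and $\tau_r$ sit exactly at energy arrival epochs, so each ranges over a set of size at most $n$. Moreover, by Lemma \ref{lemma_increase_time} each iteration simultaneously raises the last power $p_r=p_N$ and lowers the first power $p_l=p_1$, which strictly decreases the finish time; hence $T_{stop}$ is non-increasing across iterations and no arrival lying after the initial finish time can ever enter the computation. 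Thus the entire run takes place over the same $n$ epochs.

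Next I would establish the governing monotonicity: $\tau_l$ is frozen while $\tau_r$ marches strictly left. Whenever \emph{Step 2} is invoked, the first power is raised only until the start time is pinned at $T_{start}'=0$, which immediately satisfies the termination test of \emph{Step 3}; hence Step 2 occurs at most once, in the final iteration. Every non-terminating iteration therefore runs \emph{Step 1}, whose update keeps $\tau_l'=\tau_l$ explicitly, so $\tau_l$ stays at its initial value $s_2$ throughout. In Step 1 the relocation of the last tight point is always leftward: in the generic case of Fig.~\ref{figure_Algorithm1}(a) the steepened final segment becomes tight at an earlier arrival, and in the degenerate case of Fig.~\ref{figure_Algorithm1}(b) the last two segments merge and $\tau_r'$ is reset to the preceding arrival epoch. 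In both cases $\tau_r'<\tau_r$, so $\tau_r$ drops past at least one energy epoch. Being monotone and bounded below by the frozen $\tau_l=s_2$, it visits each of the $n$ epochs at most once, which caps the number of iterations at $n+1$ before a Step 3 condition --- transmission time reaching $\TRx_0$ (Lemma \ref{lemma_PULL_BACK_power} guarantees this quantity only grows), or $T_{start}'=0$ --- is triggered.

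Finally I would account for the per-iteration cost. One iteration slides $\tau_r$ to the next relevant arrival, solves the single balancing equation \eqref{eq_example1} for $p_l'$, and tests $p_l'$ against the energy constraint \eqref{pb1_constraint_energy}; each such operation is constant time per epoch examined. Because $\tau_r$ is monotone and never revisits an epoch, the scans performed across the whole execution telescope into one sweep over the $n$ arrivals, so the cumulative scanning is $O(n)$. Combining the $O(n)$ bound on iterations with amortized $O(1)$ work each gives a worst-case running time linear in the number of energy harvests before the finish time of INIT\_POLICY.

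I expect the crux to be the \emph{strict} leftward progress of $\tau_r$ in Step 1 --- proving that steepening $p_r$ and re-tightening never leaves the last tight point at the same epoch, so that no iteration can stall. This is exactly where the concavity of $g$, the convexity of $g(p)/p$, and the ``last tight epoch'' definition of $\tau_r$ must be combined; everything else is bookkeeping on a finite, monotonically shrinking set of candidate breakpoints.
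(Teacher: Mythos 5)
Your high-level strategy --- count iterations by charging each one to a monotone move of a breakpoint over the finite set of energy-arrival epochs preceding INIT\_POLICY's finish time --- is the same as the paper's. But there is a concrete error in how you account for the iterations. You claim that Step~2's infeasible branch ``raises the first power only until the start time is pinned at $T_{start}'=0$,'' so that it fires at most once and $\tau_l$ stays frozen at $s_2$ for the whole run. That is not what Step~2 does: when $p_l'$ computed in Step~1 is infeasible, the algorithm \emph{discards} the Step~1 update to $\tau_r,p_r$, raises $p_l'$ just until it becomes feasible with \eqref{pb1_constraint_energy} (not until $T_{start}'=0$), moves $\tau_l'$ to the first epoch where $U(\tau_l')=\ETx(\tau_l'^-)$, recomputes $p_r'$ with $\tau_r'=\tau_r$, and then \emph{continues iterating} unless a Step~3 condition happens to hold. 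So there can be many iterations in which $\tau_r$ does not move at all and only $\tau_l$ advances; your bookkeeping, which bounds the iteration count solely by the number of epochs visited by $\tau_r$, misses all of these. The paper's proof instead bounds the number of iterations by the number of values attained by $\tau_l$ \emph{plus} the number attained by $\tau_r$, using that exactly one of them updates per iteration and that each is monotone over the epoch set --- this is the correction your argument needs.

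A second, smaller omission: the paper spends the bulk of its proof showing that $\tau_r\ge\tau_q$ throughout, by combining the definition of $\tau_n$ and $\tau_q$ from INIT\_POLICY with the monotone growth of the transmission time $(T_{stop}-T_{start})$ from Lemma~\ref{lemma_PULL_BACK_power}: if $T_{stop}$ ever fell to $\tau_q$ the transmission time would already exceed $\TRx_0$ and Step~3 would have terminated the loop. This is what confines both pointers to the epochs of the INIT\_POLICY window and guarantees termination before the candidate set is exhausted; you cite Lemma~\ref{lemma_PULL_BACK_power} in passing but never make this confinement argument. Your closing remark that the crux is the \emph{strict} leftward progress of $\tau_r$ in every iteration is therefore aimed at the wrong target --- the real work is in handling the $\tau_l$-update iterations and in the lower bound $\tau_r\ge\tau_q$.
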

\begin{proof}
%
%
%
%
%
Since, in an iteration of PULL\_BACK, either $\tau_r$ or $\tau_l$ updates, the number of iterations is bounded by the values attained by $\tau_l$, plus that of $\tau_r$. Initially, $\tau_l\le \tau_q$ and $\tau_r \ge \tau_q$.
As $\tau_l$ is non-increasing across iterations, $\tau_l\le \tau_q$ throughout.
Assume that $\tau_r$ remains $\ge \tau_q$ across INIT\_POLICY. Then, both $\tau_l$ and $\tau_r$ can at max attain all $\tau_i$'s less than finish time of initial feasible policy. Hence, we are done.

Now, it remains to show that $\tau_r\ge \tau_q$. $\tau_n$ is defined as the first energy arrival epoch with which $B_0$ or more bits can be transmitted in $\TRx_0$ time and $\tau_q\le \tau_n$, by definition. 
So, when $T_{stop}$ becomes $\le \tau_n \, or \,  \tau_q$, then transmission time, $(T_{stop}-T_{start})$, should be  $>\TRx_0$. 
But, in the initial iteration $(T_{stop}-T_{start})\le \TRx_0$ and $(T_{stop}-T_{start})$ increases monotonically, from Lemma \ref{lemma_PULL_BACK_power}. Hence, PULL\_BACK will terminate before $T_{stop}$ (and therefore $\tau_r$) decreases beyond $\tau_q$.
\end{proof}
\subsection{QUIT}If $T_{start}' = 0$ and $T_{stop}' - T_{start}' \le \TRx_0$ upon PULL\_BACK's termination, then PULL\_BACK's policy at termination is output. Note that structure \eqref{claim2} holds for this policy. Otherwise, if $T_{stop}' - T_{start}' > \TRx_0$ (which happens for the first time), then we know that in penultimate step $T_{stop} - T_{start} < \TRx_0$.
Hence, we are looking for a policy that starts in $ [T_{start} ,\ T_{start}']$ and ends in $[T_{stop} ,\ T_{stop}']$, whose transmission time is equal to $\TRx_0$. 
Hence, we solve for $x,y$ (let the solution be $\hat{x},\hat{y}$),
\begin{align}
\nonumber (\tau_l-x)& \; g\left(\frac{\ETx(\tau_l^-)}{\tau_l-x}\right)+(y-\tau_r)\; g\left(\frac{\ETx(T_{stop}^-)}{y-\tau_r}\right)\\
&=g(p_l)(\tau_l-T_{start})+g(p_r)(T_{stop}-\tau_r),
\label{eq_termination_0}
\\
y-x&=\TRx_0.
\label{eq_termination}
\end{align}
At penultimate iteration, $(x,y)=(T_{start},T_{stop})$, \eqref{eq_termination_0} is satisfied and $y-x<\TRx_0$.
At $(x,y)=(T_{start}',T_{stop}')$, as $\ETx(T_{stop}^-)=\ETx(T_{stop}'^-)$, \eqref{eq_termination_0} is satisfied and $y-x>\TRx_0$. 
So, there must exist a solution $(\hat{x},\hat{y})$ to \eqref{eq_termination_0}, where $\hat{x}\in [T_{start}',T_{start}]$, $\hat{y}\in [T_{stop}',T_{stop}]$ and $\hat{y}-\hat{x}=\TRx_0$, for which, \eqref{claim2} holds. Output with this policy which starts at $\hat{x}$ and ends at $\hat{y}$.

\begin{theorem}
The transmission policy proposed by Algorithm $\mathsf{OFF}$ is an optimal solution to Problem \eqref{pb1}.
\label{th_algo1_2}
\end{theorem}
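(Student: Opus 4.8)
The plan is to reduce the claim to the characterization already in hand: by Theorem~\ref{th_algo1_1}, a policy solves Problem~\eqref{pb1} if and only if it satisfies \eqref{claim1}--\eqref{claim5}. It therefore suffices to verify that the policy output by $\mathsf{OFF}$ meets all five conditions, which I would do by carrying them as invariants through the three stages INIT\_POLICY, PULL\_BACK, and QUIT, while checking feasibility with respect to \eqref{pb1_constraint_energy} and \eqref{pb1_constraint_time} at every stage.

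First I would treat the output of INIT\_POLICY as the base case. By construction this policy sends exactly $B_0$ bits, so \eqref{claim1} holds; it uses constant power $p_c$ followed by the non-decreasing profile of Algorithm~1 of \cite{UlukusEH2011b}, so \eqref{claim3} holds; its powers change only at arrival epochs lying on the energy boundary with $U=\ETx$, so \eqref{claim4} holds; and it contains $\tau_q$ in $\bm{s}$ with $U(\tau_q)=\ETx(\tau_q^-)$, which is exactly Lemma~\ref{lemma_Q}, so \eqref{claim5} holds. Feasibility against the receiver constraint was already shown in INIT\_POLICY through $\widetilde{\TRx}_0\le\TRx_0$. Only \eqref{claim2} may fail, and repairing it is the sole purpose of PULL\_BACK.

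Next I would show PULL\_BACK preserves \eqref{claim1}, \eqref{claim3}, \eqref{claim4}, and \eqref{claim5} across every iteration, while, by Lemma~\ref{lemma_PULL_BACK_power}, monotonically increasing the transmission duration $(T_{stop}-T_{start})$ toward $\TRx_0$. The bit count is conserved because the compensation equation \eqref{eq_example1} (and its Step~2 analogue) is solved exactly, matching bits gained on one side against bits lost on the other, giving \eqref{claim1}. Condition \eqref{claim4} is maintained since each update places the new corner epoch $\tau_r'$ or $\tau_l'$ at an arrival epoch with $U=\ETx$, while the powers strictly between $\tau_l$ and $\tau_r$ are frozen, leaving the interior boundary-touching epochs inherited from INIT\_POLICY untouched. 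Monotonicity \eqref{claim3} persists because the updated first power remains no larger than the frozen interior powers, which remain no larger than the updated last power. Finally \eqref{claim5} persists because, as argued in the running-time analysis, $\tau_l\le\tau_q\le\tau_r$ holds throughout, so $\tau_q$ stays in the frozen region and hence in $\bm{s}$.

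Finally I would use termination and QUIT to secure \eqref{claim2}. PULL\_BACK stops when $T_{start}'=0$ or $(T_{stop}'-T_{start}')\ge\TRx_0$. In the first case the output already satisfies the $s_1=0$, $s_{N+1}\le\TRx_0$ branch of \eqref{claim2}. In the second, the penultimate iteration has duration below $\TRx_0$ while the current one exceeds it; since \eqref{eq_termination_0} holds at both parameter endpoints (using $\ETx(T_{stop}^-)=\ETx(T_{stop}'^-)$) and the duration $y-x$ varies continuously from below to above $\TRx_0$, the intermediate value theorem yields a solution $(\hat x,\hat y)$ of \eqref{eq_termination_0}--\eqref{eq_termination} with $\hat y-\hat x=\TRx_0$, giving \eqref{claim2}; this final reshaping preserves \eqref{claim1}, \eqref{claim3}, \eqref{claim4}, and \eqref{claim5}. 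The step I expect to be the main obstacle is the PULL\_BACK induction itself: proving that in every iteration one can simultaneously (i) keep the recomputed $p_l'$ energy-feasible (the reason Step~2 discards and redoes the update), (ii) anchor both new corner epochs exactly on the energy boundary, and (iii) retain power monotonicity across the frozen middle. Once these three are shown to be jointly attainable at each step, the base case, the duration monotonicity of Lemma~\ref{lemma_PULL_BACK_power}, and the intermediate value argument in QUIT combine routinely to give the result.
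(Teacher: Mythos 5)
Your overall strategy coincides with the paper's: both proofs verify that the output of $\mathsf{OFF}$ satisfies the sufficiency conditions \eqref{claim1}--\eqref{claim5} of Theorem~\ref{th_algo1_1}, treating INIT\_POLICY as the base case, the PULL\_BACK iterations as the inductive step, and QUIT as the mechanism that delivers \eqref{claim2}. The difficulty is that the part you defer as ``the main obstacle'' is precisely the substance of the paper's proof, so as written the proposal has a genuine gap at its center, namely the verification of \eqref{claim3}.

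Two arguments are missing. First, in the base case, when INIT\_POLICY switches at $\tau_q$ from the constant power $p_c$ to Algorithm 1 of \cite{UlukusEH2011b}, it is not automatic that $p_c$ is no larger than the first power $p_q$ used after $\tau_q$; the paper proves this by contradiction: if $p_q<p_c$ and the segment with power $p_q$ ends at an epoch $\tau_{q'}$ with $U(\tau_{q'})=\ETx(\tau_{q'}^-)$, then $p_c(\tau_{q'}-\tau_q)>p_q(\tau_{q'}-\tau_q)=\ETx(\tau_{q'}^-)-\ETx(\tau_q^-)$, so $p_c$ would be energy-infeasible on $[\tau_q,\tau_{q'}]$, contradicting the fact that $p_c$ is feasible up to $T_{start}+\widetilde{\TRx}_0\ge T_{stop}\ge\tau_{q'}$. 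Second, in the inductive step you assert that the updated first power stays below the frozen interior and the updated last power above it, but the algorithm only hands you one of the two inequalities directly (e.g.\ $p_r'>p_r$ when $\tau_r$ is the corner being pushed); the paper closes the loop with a bit-accounting argument: since $p_r'>p_r$ and $g(p)/p$ is decreasing, fewer bits are sent after $\tau_r$, and because the interior powers are frozen and the total is fixed at $B_0$, more bits must be sent before $\tau_l$, which with the available energy unchanged forces $p_l'\le p_l$. Without these two arguments the induction on \eqref{claim3} does not go through. The remaining items --- \eqref{claim1} via exact compensation in \eqref{eq_example1}, \eqref{claim4} by construction of the corner epochs, \eqref{claim5} via $\tau_r\ge\tau_q$ from the running-time theorem, and \eqref{claim2} via the intermediate-value argument in QUIT --- match the paper's reasoning and are fine.
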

\begin{proof}
We show that  Algorithm $\mathsf{OFF}$ satisfies the sufficiency conditions of  Theorem \ref{th_algo1_1}.
To begin with, we prove that the power allocations satisfy \eqref{claim3} by induction. 
First we establish the base case that INIT\_POLICY's output satisfies \eqref{claim3}.
If INIT\_POLICY returns the constant power policy $p_c$ from time $T_{start}$ to $T_{stop}$, then clearly the claim holds. 

Otherwise, INIT\_POLICY applies Algorithm 1 from \cite{UlukusEH2011b} with $\widetilde{B}=B_0-g(p_c)(\tau_q-T_{start})$ bits to transmit after time $\tau_q$. 
Algorithm 1 from \cite{UlukusEH2011b} ensures that transmission powers are non-decreasing after $\tau_q$. 
So we only prove that the transmission power $p_c$ between time $T_{start}$ and $\tau_q$ is $\le$ to the transmission power just after $\tau_q$ (say $p_q$), via contradiction. Assume that $p_q<p_c$. Let transmission with $p_q$ end at an epoch $\tau_{q'}$, where $U(\tau_{q'})=\ETx(\tau_{q'}^-)$ form \cite{UlukusEH2011b}.  The energy consumed between time $\tau_q$ to $\tau_{q'}$ with power $p_c$ is,
%
\begin{equation} 
p_c(\tau_{q'}-\tau_q)>p_q(\tau_{q'}-\tau_q)\stackrel{(a)}{=}(\ETx(\tau_{q'}^-)-\ETx(\tau_q^-)),\label{eq_1_algo1_modified}
\end{equation} 
where $(a)$ follows from $U(\tau_q)=\ETx(\tau_q^-)$. Further, the maximum amount of energy available for transmission between $\tau_q$ and $\tau_{q'}$ is $\left(\ETx(\tau_{q'}^-)-\ETx(\tau_q^-)\right)$. By \eqref{eq_1_algo1_modified}, transmission with $p_c$ uses more than this energy and therefore it is infeasible between time $\tau_q$ and $\tau_{q'}$. But, by definition of $p_c$, transmission with power $p_c$ is feasible till time $(T_{start}+\widetilde{\TRx}_0)$. Also, $\tau_{q'}\le T_{stop}$ by definition of $\tau_{q'}$ and $T_{stop}\le (T_{start}+\widetilde{\TRx}_0)$. So, power $p_c$ must be feasible till $\tau_{q'}$ and we reach a contradiction.        

Now, we assume that the transmission powers from PULL\_BACK are non-decreasing till its $n^{th}$ iteration. Therefore, as transmission powers between $\tau_l$ and $\tau_r$ does not change over an iteration, powers would remain non-decreasing in the $(n+1)^{th}$ iteration if we show that $p_l'<p_l$ and $p_r'>p_r$. 
In any iteration, by  definition, either $\tau_l$ or $\tau_{r}$ updates. Assume $\tau_l$ gets updated to $\tau_{l}'$, $p_l$ to $p_l'$, $p_r$ to $p_r'$ and $\tau_r$ remains same, shown Fig. \ref{figure_Algorithm1}(d) (when $\tau_r$ updates, the proof follows similarly).  
Then we are certain that $p_{r}'>p_r$ by algorithmic steps. So from $n^{th}$ to $(n+1)^{th}$ iteration, the number of bits transmitted after $\tau_r$ should decrease. Thus, the number of bits transmitted before $\tau_l$ must be increasing. This implies $p_l'\le p_l$. Hence, transmission powers by output by $\mathsf{OFF}$ are non-deceasing and it satisfies \eqref{claim3}.

%



Now consider structure \eqref{claim5}. As $\tau_q$ is present in INIT\_POLICY, the only way 
it cannot be part of the policy in an iteration of PULL\_BACK is when $\tau_r$ decreases beyond $\tau_q$. But $\tau_r\ge \tau_q$ as shown in Theorem 2.  So, the policy output by $\mathsf{OFF}$ includes $\tau_q$.
By arguments presented at end of OUIT, we know that $\mathsf{OFF}$ satisfies \eqref{claim2}.
To conclude, $\mathsf{OFF}$ satisfies \eqref{claim1}-\eqref{claim5}, hence is an optimal algorithm.
\end{proof}

\section{ONLINE ALGORITHM}
In this section, we consider solving Problem \eqref{pb1}  in the more realistic online scenario, where the transmitter and the receiver 
are assumed to have only causal information about energy arrivals. To consider the most general model, even the  distribution of future energy arrivals is unknown at both the transmitter and the receiver. Moreover, we do not limit ourselves to just one energy arrival at receiver as done for the offline case. 

\textit{Notation:} Let $B_{\mbox{\scriptsize{rem}}}(t)$ and $E_{\mbox{\scriptsize{rem}}}(t)$ denote the remaining number of bits and energy left at transmitter at any time $t$, respectively for the online algorithm. In place of $\{\bm{p},\bm{s},N\}$ for the offline case, we use the notation $\{\bm{l},\bm{b},M\}$ to denote an online policy, with identical definitions. $T_{\mbox{\scriptsize{online}}}$ and $T_{\mbox{\scriptsize{off}}}$ represent the finish time of the online and the optimal offline algorithm to Problem \eqref{pb1}, respectively.
We use the competitive ratio as a metric where we say that an online algorithm is $r$-competitive, if over all possible energy arrivals at the transmitter and the receiver, the ratio of 
$T_{\mbox{\scriptsize{online}}}$ to $T_{\mbox{\scriptsize{off}}}$ is bounded by $r$, i.e., $\displaystyle\max_{\ETx(t),\TRx(t)\hspace{0.5mm} \forall t}\dfrac{T_{\mbox{\scriptsize{online}}}}{T_{\mbox{\scriptsize{off}}}}\le r$.

\textit{Online Algorithm:} The algorithm waits till time $T_{\mbox{\scriptsize{start}}}$ 
which is the earliest energy arrival at transmitter or time addition at receiver such that using the energy $\ETx(T_{\mbox{\scriptsize{start}}})$ and time $\TRx(T_{\mbox{\scriptsize{start}}})$, $B_0$ or more bits can be transmitted, i.e.,
\begin{small} 
\begin{equation}
T_{\mbox{\scriptsize{start}}}=\min\ t \ s.t.\  \TRx(t)g\Bigg{(} \dfrac{\ETx(t)}{\TRx(t)}\Bigg{)}\ge B_0.\label{online_T_start}
\vspace{-0.2cm}
\end{equation}
\end{small}
Starting at $T_{\mbox{\scriptsize{start}}}$, the algorithm transmits with power $l_1$, such that $\frac{\ETx(T_{\mbox{\scriptsize{start}}})}{l_1}g(l_1)=B_0$.
After $T_{start}$, at \textit{every} $\tau_j$, the transmission power is changed to $l_j$ such that
\vspace{-0.25cm} 
\begin{equation}
\frac{E_{\mbox{\scriptsize{rem}}}(\tau_j)}{l_j} g(l_j)= B_{\mbox{\scriptsize{rem}}}(\tau_j). 
\vspace{-0.2cm}
\end{equation} 
Transmission power is not changed at any time arrival at the receiver after $T_{\mbox{\scriptsize{start}}}$, because there is sufficient receiver time already available to finish transmission. 

\textit{Example:} Fig. \ref{figure_online_example} shows the output of the proposed online algorithm, \eqref{online_T_start} is not satisfied at time $\tau_0$, $r_1$, and $\tau_1$. At time $r_2$, \eqref{online_T_start} is satisfied and transmission starts with a power $l_1$ such that at rate $g(l_1)$, $B_0$ bits can be sent in $\ETx(r_2)/l_1$ time. Transmission power changes to $l_2$ at time $\tau_2$ such that $\frac{E_{\mbox{\scriptsize{rem}}}(\tau_2)}{l_2}g(l_2)=B_{\mbox{\scriptsize{rem}}}(\tau_2)$, and so on.
%
%
\begin{figure}
\centering
  	\centerline{\includegraphics[width=8cm]{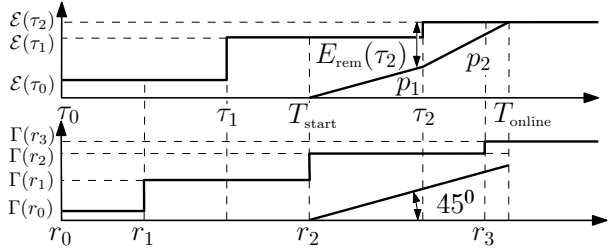}}
	\vspace{-0.25cm}
	\caption{An example for online algorithm.}
	\label{figure_online_example}
	\vspace{-0.6cm}
\end{figure}
%
%
%
%
%
\vspace{-0.1cm}
\begin{lemma}
The transmission power in the online algorithm is non-decreasing with time.
\label{online_power}
\vspace{-.2cm}
\end{lemma}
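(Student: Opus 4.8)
The plan is to collapse the whole claim onto the monotonicity of a single scalar quantity. First I would rewrite the power-setting rule: the defining equation $\frac{E_{\mathrm{rem}}(\tau_j)}{l_j}g(l_j)=B_{\mathrm{rem}}(\tau_j)$ is equivalent to $\frac{g(l_j)}{l_j}=\frac{B_{\mathrm{rem}}(\tau_j)}{E_{\mathrm{rem}}(\tau_j)}=:\rho_j$. Since $\frac{g(p)}{p}$ is monotonically decreasing by assumption, for each admissible value $\rho_j$ the power $l_j$ is uniquely determined and $l_j=(g/p)^{-1}(\rho_j)$ is a \emph{strictly decreasing} function of $\rho_j$. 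Hence proving that the online powers are non-decreasing is equivalent to proving that the ratios $\rho_j$ are non-increasing in $j$, which is a much cleaner target.

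Next I would track how $\rho$ evolves across one inter-arrival interval. Because the transmission power changes only at transmitter energy epochs $\tau_j$ (receiver time arrivals leave it unchanged), between $\tau_j$ and the next such epoch $\tau_{j+1}$ the power is held at $l_j$ for a duration $\Delta t=\tau_{j+1}-\tau_j$. Over this interval $g(l_j)\Delta t$ bits are sent and $l_j\Delta t$ energy is spent, so $B_{\mathrm{rem}}(\tau_{j+1})=B_{\mathrm{rem}}(\tau_j)-g(l_j)\Delta t$ and the energy just before the new arrival is $E_{\mathrm{rem}}(\tau_{j+1}^-)=E_{\mathrm{rem}}(\tau_j)-l_j\Delta t$. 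The crucial step is a cancellation identity showing that transmitting at the \emph{matched} power preserves the bits-to-energy ratio: substituting $B_{\mathrm{rem}}(\tau_j)=\rho_j E_{\mathrm{rem}}(\tau_j)=\frac{g(l_j)}{l_j}E_{\mathrm{rem}}(\tau_j)$ into the numerator makes the common factor $\bigl(E_{\mathrm{rem}}(\tau_j)-l_j\Delta t\bigr)$ cancel, giving $\frac{B_{\mathrm{rem}}(\tau_{j+1})}{E_{\mathrm{rem}}(\tau_{j+1}^-)}=\frac{g(l_j)}{l_j}=\rho_j$.

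Finally I would bring in the freshly harvested energy $\ETx_{j+1}\ge 0$ that arrives at $\tau_{j+1}$, which enters only the denominator, so that $\rho_{j+1}=\frac{B_{\mathrm{rem}}(\tau_{j+1})}{E_{\mathrm{rem}}(\tau_{j+1}^-)+\ETx_{j+1}}\le\frac{B_{\mathrm{rem}}(\tau_{j+1})}{E_{\mathrm{rem}}(\tau_{j+1}^-)}=\rho_j$. Thus $\rho$ is non-increasing and therefore $l_{j+1}\ge l_j$. The step from $T_{\mathrm{start}}$ to the first transmitter arrival is identical, with $E_{\mathrm{rem}}(T_{\mathrm{start}})=\ETx(T_{\mathrm{start}})$ and $B_{\mathrm{rem}}=B_0$, which supplies the base case and completes the induction.

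The hard part will be only the cancellation identity of the middle paragraph; once seen it is short, and everything else follows from the monotonicity of $g(p)/p$ and the non-negativity of new arrivals. The one subtlety requiring care is to confirm that both the numerator $B_{\mathrm{rem}}(\tau_{j+1})$ and the denominator $E_{\mathrm{rem}}(\tau_{j+1}^-)$ are strictly positive, so that $\rho_{j+1}$ stays in the range where $g(p)/p$ is invertible. This I would argue by noting that, since the matched power exhausts the remaining energy exactly when the last remaining bit is delivered (both happen at time $E_{\mathrm{rem}}(\tau_j)/l_j=B_{\mathrm{rem}}(\tau_j)/g(l_j)$ after $\tau_j$), a genuine power change at $\tau_{j+1}$ forces $\Delta t<E_{\mathrm{rem}}(\tau_j)/l_j$, keeping both quantities strictly positive.
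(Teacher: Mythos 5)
Your proposal is correct and is essentially the paper's own argument: the paper (in the combined proof of Lemmas \ref{online_power} and \ref{lemma_online_inequality}) writes $\frac{l_k}{g(l_k)}=\frac{E_{\mbox{\scriptsize{rem}}}(b_k)}{B_{\mbox{\scriptsize{rem}}}(b_k)}=\frac{l_{k-1}}{g(l_{k-1})}+\frac{E_j}{B_{\mbox{\scriptsize{rem}}}(b_{k-1})\gamma}$, which is exactly your cancellation identity (matched power preserves the bits-to-energy ratio, fresh energy only increases it) stated for the reciprocal ratio, followed by the same appeal to monotonicity of $g(p)/p$. No gaps; the only cosmetic difference is that you track $B_{\mbox{\scriptsize{rem}}}/E_{\mbox{\scriptsize{rem}}}$ while the paper tracks its inverse.
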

\begin{proof}
Combined with proof of Lemma 8.
\vspace{-0.25cm}
\end{proof}
\begin{lemma}
If power at time $t$ is $l$, then $\dfrac{\ETx(t)}{{l}}g(l) \le B_0,\;\;\forall\;\; t\in [T_{\mbox{\scriptsize{start}}}, T_{\mbox{\scriptsize{online}}}]$, with equality only at $t=T_{\mbox{\scriptsize{start}}}$.
\label{lemma_online_inequality}
\vspace{-0.1cm}
\end{lemma}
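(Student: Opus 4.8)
The plan is to reduce the claim to the transmitter energy-arrival epochs and then exploit the monotonicity of $g(x)/x$ together with the non-decreasing power structure. First observe that on any interval strictly between two consecutive transmitter arrivals no new energy is harvested and the power is held fixed, so both $\ETx(t)$ and the current power $l$ are constant there; consequently $\frac{\ETx(t)}{l}g(l)$ is piecewise constant and changes only at the arrival epochs $\tau_j$. It therefore suffices to verify $\frac{\ETx(\tau_j)}{l_j}g(l_j)\le B_0$ at each $\tau_j\in[T_{\mbox{\scriptsize{start}}},T_{\mbox{\scriptsize{online}}}]$, the base case $t=T_{\mbox{\scriptsize{start}}}$ being an equality by the very definition of $l_1$.

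Before the main estimate I would settle Lemma \ref{online_power} (which is stated to be proved jointly). At the instant of an arrival $\tau_j$ the number of remaining bits $B_{\mbox{\scriptsize{rem}}}$ is unchanged, while the remaining energy jumps up by $\ETx_j>0$. Rewriting the power rule $\frac{E_{\mbox{\scriptsize{rem}}}(\tau_j)}{l_j}g(l_j)=B_{\mbox{\scriptsize{rem}}}(\tau_j)$ as $\frac{l_j}{g(l_j)}=\frac{E_{\mbox{\scriptsize{rem}}}(\tau_j)}{B_{\mbox{\scriptsize{rem}}}(\tau_j)}$ and using that $x/g(x)$ is increasing (since $g(x)/x$ is monotonically decreasing by hypothesis), a strictly larger right-hand side forces a strictly larger $l_j$. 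Hence the power strictly increases at every transmitter arrival and is non-decreasing overall, establishing Lemma \ref{online_power}.

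For the inequality itself, I would split the harvested energy as $\ETx(\tau_j)=U(\tau_j)+E_{\mbox{\scriptsize{rem}}}(\tau_j)$, where $U(\tau_j)$ is the energy consumed by time $\tau_j$, and substitute the power rule:
\begin{align}
\frac{\ETx(\tau_j)}{l_j}g(l_j)
&=B_{\mbox{\scriptsize{rem}}}(\tau_j)+\frac{g(l_j)}{l_j}U(\tau_j)
=B_0-B(\tau_j)+\frac{g(l_j)}{l_j}U(\tau_j).
\end{align}
Thus the claim is equivalent to $\frac{g(l_j)}{l_j}U(\tau_j)\le B(\tau_j)$. Writing $U(\tau_j)=\sum_k e_k$ and $B(\tau_j)=\sum_k \frac{g(l_k)}{l_k}e_k$, where $e_k=l_k(b_{k+1}-b_k)$ is the energy spent on the $k$-th constant-power interval with power $l_k$, the desired bound follows term by term: since the powers are non-decreasing we have $l_k\le l_j$, and since $g(x)/x$ is decreasing this gives $\frac{g(l_k)}{l_k}\ge \frac{g(l_j)}{l_j}$, whence $B(\tau_j)\ge \frac{g(l_j)}{l_j}U(\tau_j)$. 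Strict inequality holds as soon as some earlier interval carries $l_k<l_j$, which by the previous paragraph is the case once at least one arrival has occurred after $T_{\mbox{\scriptsize{start}}}$.

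I expect the main subtlety to be the equality characterization rather than the inequality. The term-by-term comparison is robust, but ``equality only at $t=T_{\mbox{\scriptsize{start}}}$'' must be read with care: if no transmitter energy arrives after $T_{\mbox{\scriptsize{start}}}$ the power never changes and the quantity stays equal to $B_0$ on the whole first interval, so the honest statement is that equality can persist only up to the first post-start arrival, after which the strict power increase makes it strict. The only bookkeeping to watch is the convention that $\ETx(\tau_j)$ and $E_{\mbox{\scriptsize{rem}}}(\tau_j)$ both include the arrival at $\tau_j$, so that the decomposition $\ETx(\tau_j)=U(\tau_j)+E_{\mbox{\scriptsize{rem}}}(\tau_j)$ and the defining power equation refer to the same instant.
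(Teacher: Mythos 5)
Your proof is correct, but it follows a genuinely different route from the paper's. The paper proves the bound by induction over the constant-power intervals: it tracks the quantity $\frac{l_k}{g(l_k)}=\frac{E_{\mbox{\scriptsize{rem}}}(b_k)}{B_{\mbox{\scriptsize{rem}}}(b_k)}$ and shows the recursion $\frac{l_{k}}{g(l_{k})}=\frac{l_{k-1}}{g(l_{k-1})}+\frac{E_{j}}{B_{\mbox{\scriptsize{rem}}}(b_{k-1})\gamma}>\frac{\ETx(b_{k})}{B_0}$, from which both the inequality and the power monotonicity (Lemma \ref{online_power}) fall out of the same computation. You instead establish Lemma \ref{online_power} first and then give a direct, non-inductive argument: the decomposition $\ETx(\tau_j)=U(\tau_j)+E_{\mbox{\scriptsize{rem}}}(\tau_j)$ reduces the claim to $\frac{g(l_j)}{l_j}U(\tau_j)\le B(\tau_j)$, which follows term by term from $l_k\le l_j$ and the monotone decrease of $g(x)/x$. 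This buys a transparent explanation of where the slack comes from (earlier, smaller powers deliver more bits per Joule) and a clean equality characterization, including your accurate observation that equality actually persists over the whole first interval $[b_1,b_2)$, not just at the single instant $T_{\mbox{\scriptsize{start}}}$ --- which is consistent with what the paper's induction really proves. The paper's induction is more compact and yields both lemmas simultaneously. One step you should make explicit: in your argument for Lemma \ref{online_power}, ``a strictly larger right-hand side'' is larger than $l_{j-1}/g(l_{j-1})$ only because the ratio $E_{\mbox{\scriptsize{rem}}}(t)/B_{\mbox{\scriptsize{rem}}}(t)$ is \emph{invariant} along a constant-power segment (numerator and denominator decrease at rates $l_{j-1}$ and $g(l_{j-1})$ starting from values in exactly that ratio), so that just before the arrival it still equals $l_{j-1}/g(l_{j-1})$; this one-line computation is the hinge of your separation of the two lemmas and should not be left implicit.
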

\begin{proof}
It is enough to prove that $\frac{g(l_i)}{l_i} \le \frac{B_0}{\ETx(b_i)}$ for $i\in\{1,..,M\} $, because both $l_i$ and $\ETx(t)$ remains constant in $t\in[b_i,b_{i+1})$. We prove this by induction on $i$ in $\{1,2..,M\}$. 

With $b_1=T_{\mbox{\scriptsize{start}}}$, the base case follows since at time $T_{start}$, $\frac{\ETx(T_{\mbox{\scriptsize{start}}})}{l_1}g(l_1)=B_0$. Now, assume $\frac{g(l_i)}{l_i}\le \frac{B_0}{\ETx(b_i)}$ to be true for $i=k-1$, $k\in \{2,..,M\}$. As $b_k=\tau_j$ for some $j$,
\begin{align*}
&{\frac{l_{k}}{g(l_{k})}=\frac{E_{\mbox{\scriptsize{rem}}}(b_{k})}{B_{\mbox{\scriptsize{rem}}}(b_{k})}=\frac{E_{\mbox{\scriptsize{rem}}}(b_{k-1})-l_{k-1} (b_k-b_{k-1})+E_{j}}{B_{\mbox{\scriptsize{rem}}}(b_{k-1})-g(l_{k-1}) (b_k-b_{k-1})},}
\\
&\stackrel{(a)}{=}\frac{l_{k-1}}{g(l_{k-1})}+\frac{E_{j}}{B_{\mbox{\scriptsize{rem}}}(b_{k-1})\gamma}
\stackrel{(b)}{>}\frac{\ETx(b_{k-1})}{B_0}+\frac{E_{j}}{B_0}=\frac{{\ETx(b_{k})}}{B_0}.
\end{align*}
where $(a)$ follows from $\frac{B_{\mbox{\scriptsize{rem}}}(b_{k-1})}{E_{\mbox{\scriptsize{rem}}}(b_{k-1})}=\frac{g(l_{k-1})}{l_{k-1}}$ and defining $\gamma=\left( 1-\frac{l_{k-1}(b_k-b_{k-1})}{E_{\mbox{\scriptsize{rem}}}(b_{k-1})} \right)< 1$, $(b)$ uses induction hypothesis along with $B_{\mbox{\scriptsize{rem}}}(b_{k-1})\gamma< B_0$. This completes the proof of Lemma \ref{lemma_online_inequality}. From equality $(a)$ we can see that $g(l_k)/l_k<g(l_{k-1})/l_{k-1}$. Hence, by monotonicity of $g(p)/p$, $l_{k}>l_{k-1}$. This proves Lemma \ref{online_power} as well.
\end{proof}
\begin{lemma} 
\vspace{-0.1cm}
For the online algorithm, $T_{\mbox{\scriptsize{start}}} <T_{\mbox{\scriptsize{off}}}$.
\label{online_time}
\vspace{-0.2cm}
\end{lemma}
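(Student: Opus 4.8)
The plan is to show that no feasible policy can complete the transmission of $B_0$ bits at or before $T_{\text{start}}$, so in particular the optimal offline policy cannot, which forces $T_{\text{off}}>T_{\text{start}}$. The whole argument rests on a single ``bit-capacity'' bound: at any time $t$, the number of bits any feasible policy can have delivered is at most $\Gamma(t^-)\,g\!\left(\mathcal{E}(t^-)/\Gamma(t^-)\right)$, where $\mathcal{E}(t^-)$ and $\Gamma(t^-)$ are the transmitter energy and receiver on-time that have \emph{strictly} arrived before $t$.

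First I would establish this capacity bound. Fix a feasible policy and let $O(t)$ and $U(t)$ be its total receiver on-time and total consumed transmitter energy by time $t$. Since bits cannot be produced before the corresponding energy and on-time are supplied, $O(t)\le\Gamma(t^-)$ and $U(t)\le\mathcal{E}(t^-)$. By concavity of $g$ (Jensen's inequality applied to the piecewise-constant power profile), $B(t)=\sum_i g(p_i)\Delta_i\le O(t)\,g\!\left(U(t)/O(t)\right)$, i.e. a constant-power schedule is bit-optimal for a fixed energy budget and on-time, exactly the principle behind INIT\_POLICY. It then remains to show the map $h(O,U)=O\,g(U/O)$ is non-decreasing in each argument, so that $O(t),U(t)$ may be inflated to $\Gamma(t^-),\mathcal{E}(t^-)$. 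Monotonicity in $U$ is immediate, since $\partial_U h=g'(U/O)\ge 0$; monotonicity in $O$ follows from $\partial_O h=g(x)-x\,g'(x)$ with $x=U/O$, which is nonnegative because concavity together with $g(0)=0$ gives $g(x)\ge x\,g'(x)$. This yields $B(t)\le\Gamma(t^-)\,g\!\left(\mathcal{E}(t^-)/\Gamma(t^-)\right)$ for every $t$.

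Next I would feed the definition of $T_{\text{start}}$ into this bound. By \eqref{online_T_start}, $T_{\text{start}}$ is the \emph{least} $t$ with $\Gamma(t)\,g(\mathcal{E}(t)/\Gamma(t))\ge B_0$; hence for all $t<T_{\text{start}}$ this quantity is strictly below $B_0$. Because $\mathcal{E}$ and $\Gamma$ are constant on the interval just to the left of the triggering arrival epoch $T_{\text{start}}$, their left limits satisfy $\Gamma(T_{\text{start}}^-)\,g(\mathcal{E}(T_{\text{start}}^-)/\Gamma(T_{\text{start}}^-))<B_0$. Combining with the capacity bound at $t=T_{\text{start}}$ gives $B(T_{\text{start}})<B_0$ for \emph{every} feasible policy, in particular for the optimal offline one; since it has delivered fewer than $B_0$ bits at $T_{\text{start}}$, its finish time must be strictly later, i.e. $T_{\text{off}}>T_{\text{start}}$.

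The main obstacle I expect is the \emph{strictness} of the inequality rather than the bound itself: one must carefully use the left limits $\mathcal{E}(t^-),\Gamma(t^-)$ (the energy and time usable strictly before completion) so that the arrival occurring exactly at $T_{\text{start}}$, which is precisely the one that first makes $B_0$ bits feasible, is correctly excluded. This is what upgrades the weak conclusion $T_{\text{off}}\ge T_{\text{start}}$ to the strict $T_{\text{off}}>T_{\text{start}}$ claimed by the lemma; intuitively, the energy (or receiver time) that tips feasibility over $B_0$ only materializes at $T_{\text{start}}$ and cannot be consumed to deliver bits in zero time, since finite power produces bits only over a positive duration.
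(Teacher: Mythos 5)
Your proposal is correct and follows essentially the same route as the paper: both bound the bits any feasible policy can deliver by $T_{\mbox{\scriptsize{start}}}$ by $\TRx(T_{\mbox{\scriptsize{start}}}^-)\,g\bigl(\ETx(T_{\mbox{\scriptsize{start}}}^-)/\TRx(T_{\mbox{\scriptsize{start}}}^-)\bigr)$ via Jensen's inequality plus monotonicity of $g$ and $g(p)/p$ (which is exactly your monotonicity of $O\,g(U/O)$ in each argument), and then invoke the minimality in the definition of $T_{\mbox{\scriptsize{start}}}$ to conclude this quantity is strictly below $B_0$. The paper phrases it as a contradiction while you argue directly, but the content is identical.
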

\begin{proof}We use Contradiction. Suppose $T_{\mbox{\scriptsize{start}}} \ge T_{\mbox{\scriptsize{off}}}$. From \eqref{online_T_start}, either $T_{\mbox{\scriptsize{start}}}=\tau_i$ for some $i$ and/or $T_{\mbox{\scriptsize{start}}}=r_j$ for some $j$.
Let $T_{\mbox{\scriptsize{start}}}=\tau_i$. Since, the offline algorithm $\{\bm{p},\bm{s},N\}$ finishes before $T_{\mbox{\scriptsize{start}}}$, 
the maximum (cumulative) energy utilized by the optimal offline algorithm is at most the energy arrived till time $T_{\mbox{\scriptsize{start}}}^-$. So, $\sum_{i:p_i\neq 0}p_i(s_{i+1}-s_{i})\le \ETx(T_{\mbox{\scriptsize{start}}}^-)=\ETx(T_{\mbox{\scriptsize{start}}})-\ETx_i\neq \ETx(T_{\mbox{\scriptsize{start}}})$.
Similarly, if $T_{\mbox{\scriptsize{start}}}=r_j$, then the maximum time for which the receiver can be \textit{on} is $\TRx(T_{\mbox{\scriptsize{start}}}^-)$. So, $\sum_{i:p_i\neq 0}(s_{i+1}-s_{i})\le\TRx(T_{\mbox{\scriptsize{start}}}^-)=\TRx(T_{\mbox{\scriptsize{start}}})-\TRx_j\neq \TRx(T_{\mbox{\scriptsize{start}}})$.

Therefore, the total bits transmitted by the optimal offline algorithm $\{\bm{p},\bm{s},N\}$ is $\sum_{i=1,\ p_i\neq 0}^{N} g(p_i)(s_{i+1}-s_{i})$
\vspace{-0.2cm}
\begin{align}
&\nonumber \stackrel{(a)}{\le}g\left(\frac{\sum_{i:p_i\neq 0}p_i(s_{i+1}-s_{i})}{\sum_{j:p_j\neq 0}(s_{j+1}-s_{j})}\right)\sum_{j:p_j\neq 0} (s_{j+1}-s_{j}),
\\
&\stackrel{(b)}\le g\left(\frac{\ETx(T_{\mbox{\scriptsize{start}}}^-)}{\TRx(T_{\mbox{\scriptsize{start}}}^-)}\right)\TRx(T_{\mbox{\scriptsize{start}}}^-)\stackrel{(c)}{<}B_0,\label{online_eq_2}
\end{align}
where $(a)$ follows from Jensen's inequality since $g(p)$ is concave, $(b)$ follows from monotonicity of $g(p)/p$,
and $(c)$ follows from \eqref{online_T_start}. \eqref{online_eq_2} says that offline policy transmits less than $B_0$ bits and therefore, we arrive at a contradiction.
\end{proof}
\begin{theorem}\label{thm:onlinecomp}
\vspace{-0.1cm}
The proposed online algorithm is $2$-competitive.
\end{theorem}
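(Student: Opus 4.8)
The plan is to bound $T_{\mbox{\scriptsize{online}}}$ by propagating a \emph{projected} completion time through the run of the algorithm. As a first orientation one may write $T_{\mbox{\scriptsize{online}}}=T_{\mbox{\scriptsize{start}}}+D$, where $D$ is the (uninterrupted, by Lemma \ref{online_power}) transmission duration; Lemma \ref{online_time} already supplies $T_{\mbox{\scriptsize{start}}}<T_{\mbox{\scriptsize{off}}}$, so only the execution needs control, but—as explained at the end—the adaptive power increases make it cleaner to bound $T_{\mbox{\scriptsize{online}}}$ as a whole. I would introduce $D^\star(E)$, the minimum receiver on-time needed to deliver $B_0$ bits with energy budget $E$, i.e.\ the solution of $D\,g(E/D)=B_0$; since $g$ is concave with $g(0)=0$, the map $D\mapsto D\,g(E/D)$ is increasing, so $D^\star$ is well defined, decreasing in $E$, and homogeneous of degree one jointly in energy and bit-count.

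The core of the argument tracks the projected completion time. At an instant with current power $l$, remaining bits $B_{\mbox{\scriptsize{rem}}}$ and remaining energy $E_{\mbox{\scriptsize{rem}}}$, the algorithm is on course to finish after a further $E_{\mbox{\scriptsize{rem}}}/l=B_{\mbox{\scriptsize{rem}}}/g(l)$ units of time; this projection is constant between energy arrivals and, because each arrival raises $l$ (Lemma \ref{online_power}) while leaving $B_{\mbox{\scriptsize{rem}}}$ momentarily unchanged, it is non-increasing, with $T_{\mbox{\scriptsize{online}}}$ equal to its final value. I would then show that at every energy-arrival epoch $t\ge T_{\mbox{\scriptsize{start}}}$ the projected completion time is at most $t+D^\star(\ETx(t))$. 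This is where Lemma \ref{lemma_online_inequality} enters: it gives $B_{\mbox{\scriptsize{rem}}}/E_{\mbox{\scriptsize{rem}}}=g(l)/l\le B_0/\ETx(t)$, so the remaining task is at least as energy-rich per bit as a fresh delivery of $B_0$ bits using $\ETx(t)$; combining this with the homogeneity and energy-monotonicity of $D^\star$ bounds the remaining projected duration by $D^\star(\ETx(t))$.

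To finish, I would evaluate the projection at the last energy arrival in $[T_{\mbox{\scriptsize{start}}},T_{\mbox{\scriptsize{off}}}]$, or at $t'=T_{\mbox{\scriptsize{start}}}$ itself (legitimate by Lemma \ref{online_time}) if there is none. Since no energy arrives in $(t',T_{\mbox{\scriptsize{off}}}]$ we have $\ETx(t')=\ETx(T_{\mbox{\scriptsize{off}}})$ and $t'\le T_{\mbox{\scriptsize{off}}}$, and because the projection is non-increasing, $T_{\mbox{\scriptsize{online}}}\le t'+D^\star(\ETx(T_{\mbox{\scriptsize{off}}}))\le T_{\mbox{\scriptsize{off}}}+D^\star(\ETx(T_{\mbox{\scriptsize{off}}}))$. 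Finally, the optimal offline policy delivers $B_0$ bits using energy at most $\ETx(T_{\mbox{\scriptsize{off}}})$ and receiver on-time at most $T_{\mbox{\scriptsize{off}}}$, so Jensen's inequality (concavity of $g$) forces $T_{\mbox{\scriptsize{off}}}\ge D^\star(\ETx(T_{\mbox{\scriptsize{off}}}))$; substituting yields $T_{\mbox{\scriptsize{online}}}\le 2\,T_{\mbox{\scriptsize{off}}}$, strictly so thanks to the strict inequality in Lemma \ref{online_time}.

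The step I expect to be the real obstacle is the projection bound of the second paragraph. The naive estimate—lower-bounding every rate by the initial $g(l_1)$—only gives $D\le B_0/g(l_1)$, which can exceed $T_{\mbox{\scriptsize{off}}}$ precisely when the offline optimum exploits energy arriving after $T_{\mbox{\scriptsize{start}}}$; the point is that those \emph{same} arrivals drive the online power up through Lemma \ref{lemma_online_inequality}, and the proof must quantify that speed-up (hence the detour through $D^\star$ and its homogeneity) rather than discarding it. One must also be careful that the online projection and the offline lower bound are expressed in the common currency of receiver on-time, using $\mbox{on-time}\le\mbox{elapsed real time}$ so that a duration bound genuinely translates into a bound on $T_{\mbox{\scriptsize{off}}}$.
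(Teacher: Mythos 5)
Your proposal is correct and is essentially the paper's own argument in different clothing: your evaluation point $t'$ is the paper's $b_k$, your projected-completion-time bound via $D^\star$ is the paper's chain $(b_{N+1}-b_k)\le B_{\mbox{\scriptsize{rem}}}(b_k)/g(l_k)=E_{\mbox{\scriptsize{rem}}}(b_k)/l_k\le \ETx(T_{\mbox{\scriptsize{off}}}^-)/l_k\le T_{\mbox{\scriptsize{off}}}$, and both rest on the same ingredients (Lemmas \ref{online_power}, \ref{lemma_online_inequality}, \ref{online_time}, Jensen's inequality for the offline bound, and monotonicity of $g(p)/p$). No substantive difference or gap.
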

\begin{proof}



\vspace{-0.2cm}
Let the online algorithm transmit with power $l_k$ at time $T_{\mbox{\scriptsize{off}}}^-$.
Since $T_{\mbox{\scriptsize{start}}}<T_{\mbox{\scriptsize{off}}}$ by Lemma \ref{online_time}, ${l}_k > 0$. Let $b_k<T_{\mbox{\scriptsize{off}}}$ be the time where transmission starts with power $l_k$. 
By definition, $\sum_{i=k}^{M}g(l_i)(b_{i+1}-b_i)=B_{\mbox{\scriptsize{rem}}}(b_k)$.
From  Lemma \ref{online_power},
\begin{equation}
(b_{N+1}-b_k)\le\frac{B_{\mbox{\scriptsize{rem}}}(b_k)}{g(l_k)}=
\frac{E_{\mbox{\scriptsize{rem}}}(b_k)}{l_k}\le \frac{\ETx(b_k)}{l_k}\le \frac{\ETx(T_{\mbox{\scriptsize{off}}}^-)}{l_k}.
\label{eq_online_time_1}  
\vspace{-0.3cm}
\end{equation}
Applying Lemma \ref{lemma_online_inequality} at time $T_{\mbox{\scriptsize{off}}}^-$,
\vspace{-0.1cm}
\begin{equation}
\frac{\ETx(T_{\mbox{\scriptsize{off}}}^-)}{l_k}g(l_k)\le B_0\stackrel{(a)}{\le}T_{\mbox{\scriptsize{off}}}\; g\left(\frac{\ETx(T_{\mbox{\scriptsize{off}}}^-)}{T_{\mbox{\scriptsize{off}}}}\right),
\label{eq_online_time_2}
\vspace{-0.2cm}
\end{equation}
where $(a)$ holds because the maximum number of bits sent by the optimal offline policy by time $T_{\mbox{\scriptsize{off}}}$ can be bounded by $T_{\mbox{\scriptsize{off}}}\, g\left(\frac{\ETx(T_{\mbox{\scriptsize{off}}}^-)}{T_{\mbox{\scriptsize{off}}}}\right)$ due to concavity of $g(p)$. \vspace{-0.2cm}
By monotonicity of $g(p)/p$, from \eqref{eq_online_time_2}, it follows that  
$\frac{\ETx\left(T_{\mbox{\scriptsize{off}}}^-\right)}{l_k}\le T_{\mbox{\scriptsize{off}}}$.
Combining this with \eqref{eq_online_time_1},
$(b_{{N}+1}-b_k)\le T_{\mbox{\scriptsize{off}}}$.
As $b_k<T_{\mbox{\scriptsize{off}}}$, we calculate the competitive ratio as,
\vspace{-0.1cm}
\begin{equation*}
r=\displaystyle\max_{\ETx(t),\TRx(t)\hspace{0.5mm} \forall t}\dfrac{T_{\mbox{\scriptsize{online}}}}{T_{\mbox{\scriptsize{off}}}} = \dfrac{(b_{{N}+1}-b_k)+b_k}{T_{\mbox{\scriptsize{off}}}} <  2.
\vspace{-0.5cm}
\end{equation*}
\end{proof}
{\it Discussion:} Theorem \ref{thm:onlinecomp} is a significant result, since it tells us that the proposed online (causal) algorithm will finish in at most twice the time an optimal offline algorithm takes knowing all energy arrivals non-causally. Moreover, the online algorithm is independent of the energy arrival distributions both at the transmitter and the receiver, so has built-in robustness. Also, note that the proof of Theorem \ref{thm:onlinecomp} does not explicitly require to know the exact structure of the optimal offline algorithm.

\bibliographystyle{IEEEtran}
\bibliography{IEEEabrv,../Work/TIFR/Research/Research}
 
\end{document}